\renewcommand{\d}{{\mathrm{d}}}
\newcommand{\m}{{\mathrm{m}}}
\newcommand{\hu}{\hat{u}}
\newcommand{\hy}{\hat{y}}
\newcommand{\tu}{\tilde{u}}
\newcommand{\ty}{\tilde{y}}
\newcommand{\IN}{I^{(N)}}
\newcommand{\Il}{I^{(L)}}
\newtheorem{prop}{Proposition}
\renewenvironment{proof}[1][Proof]{\noindent\textit{#1. } }{\hfill$\square$}
 \newtheoremstyle{theorem}{6pt}{6pt}{\rm}{}{\sffamily}{ }{ }{}
 \theoremstyle{theorem}
\newtheorem{theorem}{\sc Theorem}[section]
 \newtheoremstyle{algorithm}{6pt}{6pt}{\rm}{}{\sffamily}{ }{ }{}
 \theoremstyle{algorithm}
\newtheorem{algorithm}{\sc Algorithm}[section]
 \newtheoremstyle{lemma}{6pt}{6pt}{\rm}{}{\sffamily}{ }{ }{}
 \theoremstyle{lemma}
 \newtheorem{lemma}{\sc Lemma}[section]
\newtheoremstyle{case}{6pt}{6pt}{\rm}{}{\sffamily}{. }{ }{}
 \theoremstyle{case}
\newtheorem{case}{{\sc Case}}
 \newtheoremstyle{statement}{6pt}{6pt}{\rm}{}{\sffamily}{ }{ }{}
\theoremstyle{statement}
 \newtheoremstyle{corollary}{6pt}{6pt}{\rm}{}{\sffamily}{ }{ }{}
 \theoremstyle{corollary}
 \newtheorem{corollary}{\sc Corollary}[section]
  \newtheoremstyle{definition}{6pt}{6pt}{\rm}{}{\sffamily}{ }{ }{}
 \theoremstyle{definition}
 \newtheorem{definition}{\sc Definition}[section]
\newtheoremstyle{example}{6pt}{6pt}{\rm}{}{\sffamily}{ }{ }{}
\theoremstyle{example}
\newtheorem{example}[theorem]{\sc Example}
\newtheoremstyle{remark}{6pt}{6pt}{\rm}{}{\sffamily}{ }{ }{}
\theoremstyle{remark}
\newtheorem{remark}{\sc Remark}[section]
\newtheoremstyle{approximation}{6pt}{6pt}{\rm}{}{\sffamily}{ }{ }{}
\theoremstyle{approximation}
\newtheoremstyle{scheme}{6pt}{6pt}{\rm}{}{\sffamily}{ }{ }{}
\theoremstyle{scheme}
\newtheoremstyle{Algorithm}{6pt}{6pt}{\rm}{}{\sffamily}{ }{ }{}
\theoremstyle{Algorithm}
\newtheoremstyle{Assumption}{6pt}{6pt}{\rm}{}{\sffamily}{ }{ }{}
\theoremstyle{Assumption}
\newtheoremstyle{proposition}{6pt}{6pt}{\rm}{}{\sffamily}{ }{ }{}
\theoremstyle{proposition}
\newtheoremstyle{hypo}{6pt}{6pt}{\rm}{}{\sffamily}{ }{ }{}
 \theoremstyle{hypo}
  \newtheoremstyle{Step}{6pt}{6pt}{\rm}{}{}{ }{ }{}
 \theoremstyle{Step}
\numberwithin{equation}{section}
\begin{document}

%%%%%%%%%%%%%%%%%
\title{Method for estimating hidden structures determined by unidentifiable state-space models and time-series data based on the Gr\"{o}bner basis}
\author{ {\sc Mizuka Komatsu and Takaharu Yaguchi}\\[2pt]
Graduate School of System Informatics, Kobe University,  \\
 1-1 Rokkodai-cho, Nada-ku, Kobe 657-8501, Japan.\\[6pt]
%{\rm [Received on 18 February 2005]}\vspace*{6pt}
}
\pagestyle{headings}
%\markboth{C. E. POWELL}{\rm PARAMETER-FREE H(DIV) PRECONDITIONING}
\markboth{Komatsu, M. \& Yaguchi, T.}%{\rm Method for estimating hidden structures determined by unidentifiable state-space models and time-series data based on the Gr\"obner basis}
{\rm %METHOD FOR 
ESTIMATING HIDDEN STRUCTURES 
DETERMINED 
BY %UNIDENTIFIABLE 
MODELS \& %TIME-SERIES 
DATA 
BASED ON %THE 
GR\"OBNER BASIS}
\maketitle

%%%%%%%%%%%%%%%%%abstract style
%Two grouping braces are necessary in abstract environment
%first argument contains abstract text; second argument contains keywords
%text

\begin{abstract}
{In this study, we propose a method for extracting the hidden algebraic structures of model parameters that are uniquely determined by observed time-series data and unidentifiable state-space models, explicitly and exhaustively.
	State-space models are often constructed based on the domain, for example, physical or biological. Such models include parameters that are assigned specific meanings in relation to the system under consideration, 
	which is examined by estimating the parameters using the corresponding data.
	As the parameters of unidentifiable models cannot be uniquely determined from the given data, it is difficult to examine the systems described by such models. 
	To overcome this difficulty, multiple possible sets of parameters are estimated and analysed in the exiting approaches; however, in general, all the possible parameters cannot be explored; therefore, considerations on the system using the estimated parameters become insufficient.

	In this study, focusing on certain structures determined by the observed data and models uniquely, even if they are unidentifiable, we introduce the concept of parameter variety. This is newly defined and proven to form algebraic varieties, in general. A computational algebraic method that relies on the Gr\"{o}bner basis for deriving the explicit representation of the varieties is presented along with the supporting theory.
Furthermore, its application in the analysis of a model that describes virus dynamics is presented.
With this, new insight on the dynamics overlooked by the conventional approach are discovered, confirming the applicability of our idea and the proposed method.}
{unidentifiable model, state-space model, parameter estimation, Gr\"{o}bner basis, algebraic variety.}
\end{abstract}
%%%%%%%%%%%%%%%%%%%%%%%%%%%%%%%

\section{Introduction}
In this study, we deal with the problem of analysing systems based on the observed time-series data using state-space models that describe the systems. Along with the supporting theory, we propose a method for extracting all the feasible parameters of unidentifiable state-space models, given time-series data,
explicitly, based on our novel concept defined as the parameter variety.

In various fields such as systems biology and control, systems under considerations are modelled as and investigated through mathematical models \cite{nonlinearsys, science, pbpk}. In particular, if the models have unknown parameters, they are estimated using observed time-series data, etc. An illustrative example of such a parameter identification problem is presented below.

Suppose we have a state-space model that describes the competition within two species:
\begin{align}
  \frac{\d x_1}{\d t} &= a_1x_1\left(1-\frac{x_1}{a_2}-a_3\frac{x_2}{a_2}\right), \label{eq:lv1} \\
  \frac{\d x_2}{\d t} &= a_4x_2\left(1-\frac{x_2}{a_5}-a_6\frac{x_1}{a_5}\right), \label{eq:lv2}\\
    y &= x_1, \label{eq:lvout}
\end{align}
where $x \in \mathbb{R}^2$ is the state variable vector that denotes the number of the species; $y(t) \in \mathbb{R}$ is the output to be observed; and $a = {\left(a_1, \ldots, a_6 \right)}^\top\in \mathbb{R}^6$ is the unknown parameter vector, whose elements are positive constants. The set of ordinary differential equations \eqref{eq:lv1}--\eqref{eq:lv2} belongs \textcolor{black}{to the well-known Lotka-Volterra competition model \cite{Murray}}. Typically, when modelling a system under consideration, a specific meaning in relation to the modelled system is assigned to each parameter: $a_1, a_4$ birth rates, $a_2, a_5$ carrying capacities, and $a_3, a_6$ effect on the other species. The set of parameters $a$ is estimated using observed time-series data, where $x_2$ is not observed but $x_1$ is, as shown in \eqref{eq:lvout}. For simplicity, we assume that an artificial time-series data generated by \eqref{eq:lv1}, \eqref{eq:lv2}, \eqref{eq:lvout} given $a = \left(1, 0.5, 5, 1, 0.2, 2.4\right) =: a^{*}$, $x(0) = (1,2)^T$, is observed as depicted in %\ref{fig:illust_graph}
Fig. \ref{illust_graph}. It would be satisfactory, if $a^{*}$ is correctly estimated for fitting the model output $y$ to the data using methods such as the Newton method; thereby, the system, i.e., the competitions within the two species, can be investigated using the estimated parameters.
 For instance, the effect on the first species by the second is revealed to be greater than vice versa because $a_3^* < a_6^*$.
 \begin{figure}
  \centering \label{illustration}
  \subfigure[Observed time-series data (black dots) and the output of model \eqref{eq:lv1}, \eqref{eq:lv2}, \eqref{eq:lvout} given $a^* = (1, 0.5, 5, 1, 0.2, 2.4)$ and $a^{**} =(1, 0.5, 0.9, 1, 1.1, 12.8)$ (red solid line and blue dashed line, respectively).]{%
            \includegraphics[clip, width=0.45\textwidth]{./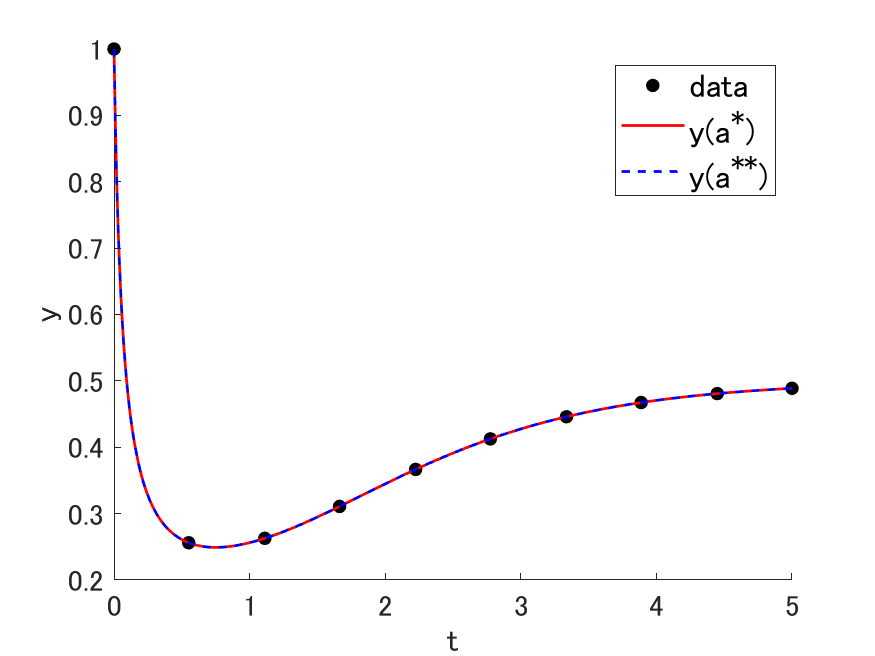}
         \label{illust_graph} }
  %\hfill
            \subfigure[Estimated parameters depicted in three-dimensional parameter space. $a^{*}$ and $a^{**}$ are denoted by the red circle and blue square, respectively. $a^{*}, a^{**}$, and the parameters denoted by black circles generate the same model output $y$ that fits the observed data.]{%
            \includegraphics[clip, width=0.45\textwidth]{./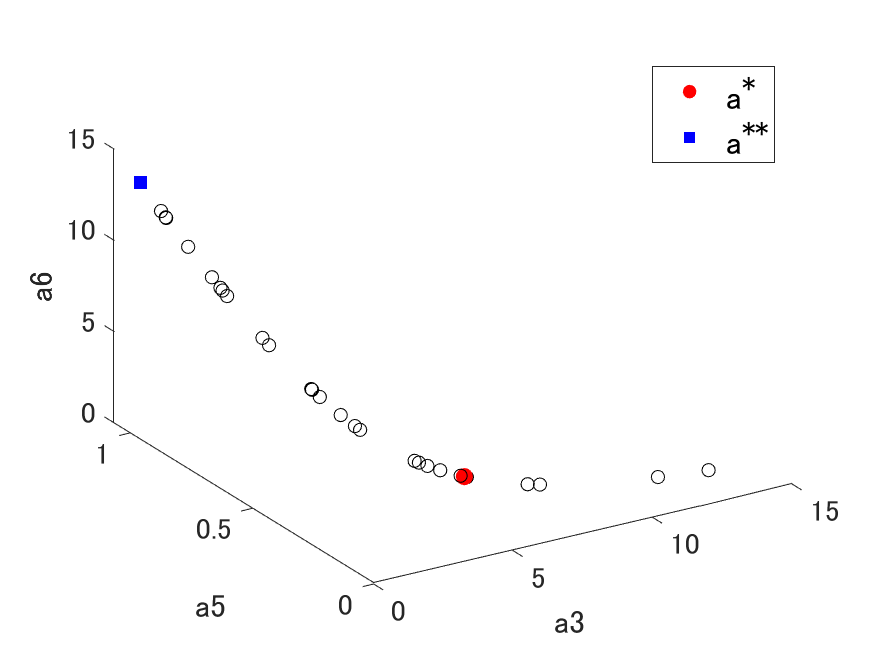}
     \label{illust_pv}
     }
             \caption{Illustrative example of an unidentifiable state-space model and its parameter estimation}
  \end{figure}

However, if the estimation method is changed, e.g., given different \textcolor{black}{initial parameter values for the Newton iteration},
different parameters can be estimated for \eqref{eq:lv1}, \eqref{eq:lv2}, \eqref{eq:lvout}. As shown in Fig. %%\ref{fig:illust_graph}
\ref{illust_graph}, estimated parameter $a^{**} = \left(1, 0.5, 0.9, 1, 1.1, 12.8\right)$ generates the same output trajectory as the one generated given $a^*$. Moreover, as $a_3^{**} > a_6^{**}$, the opposite interpretation for the system compared to the one obtained with $a^*$ is acquired. In addition, a number of parameters exist with which the output fits the data, as shown in Fig. %\ref{fig:illust_pv}
\ref{illust_pv}. Hence, the interpretation of the estimated results is revealed to be no longer straightforward. Such a situation is not only problematic, but can easily occur in real applications \cite{meshkatn, science}, which are at times overlooked as mentioned in \cite{reverse_eng_biol}.

In this study, under certain assumptions, we consider the above-mentioned state-space models, called ``unidentifiable" models. Precisely defined later in the paper, an ``unidentifiable" model can be considered as a type of model, 
whose parameters cannot be uniquely determined from the given data \cite{ljung1994, miao, meshkatl, meshkatn}. Mainly, there are two methods to deal with this type of model:

\begin{itemize}
  \item Rearrange the model into an identifiable one such that there exists a unique parameter $a$ for $y$ to fit the observed time-series data.
  \item Estimate multiple parameters such that $y$ fits the observed data, and analyse them to examine the parameter relationships. %knowledge on parameters of interest.
\end{itemize}
If modifications in the model are allowed and are reasonable, the first approach, which has been well-studied \cite{miao, meshkatl, zvi,chappell, evans}, is valid. %Thereby the investigation of the system becomes straightforward.
In this case, the system under consideration can be investigated using the estimated parameters. However, if this is not the case, the second approach needs to be used.

We focus on the latter case and propose an alternative approach in this study.
In some applications, it is not possible or reasonable to rearrange the model because of constraints such as those on experiments \cite{pbpk}. Thus, multiple parameters are often estimated and analysed \cite{Hengl_2007, profilelikelihood1, profilelikelihood2}.
In \cite{Hengl_2007}, a method to estimate the functional relations between parameters using the estimated multiple parameters is introduced. This approach implicitly assumes that it is impossible to obtain all the possible parameters, and instead, the relationships are investigated in a data-driven manner. Compared to this, the profile likelihood method \cite{profilelikelihood1, profilelikelihood2} enumerates the feasible parameters at least locally in a numerical sense; 
the parameters are first estimated by assigning a specific value to one of the parameters. The specific value then is shifted slightly. These two steps are alternately repeated for each parameter. However, the parameter relationships that can be estimated with this method are restricted to a case where they form one-dimensional manifolds, due to its algorithmic nature. In addition, although these existing methods provide practical frameworks for investigating the parameters of unidentifiable models with the observed data, theoretical studies on such parameters have been limited, to the best of our knowledge.

In view of the above, in this study, the algebraic geometric structures of the feasible parameters of unidentifiable models, given observed data, are investigated theoretically, restricting the models to ordinary differential equations with polynomial terms of the state and input variables. 
More precisely, under certain additional assumptions, 
\begin{enumerate}
  \item we show that the possible parameters determined by the given unidentifiable state-space model and the observed time-series data generally form an algebraic variety by applying 
commutative algebra and algebraic geometry
and
  \item we propose a method to extract the variety, i.e., enumerate all the possible parameters by combining computational algebraic techniques and numerical calculations.
\end{enumerate}

Considering that the state-space models and time-series data determine certain geometric structures, the key idea of our approach is to extract all the possible parameters, even if the models are unidentifiable. We define our novel concept as the parameter variety, in this study. Such structures may be implicitly investigated through existing approaches such as \cite{Hengl_2007, profilelikelihood1, profilelikelihood2}.

In our approach,
we first derive them mathematically, and then estimate them given the temporal measurements. Once the structure, i.e., the variety is extracted, overlooking of the feasible parameters, which may lead to insufficient or inappropriate system considerations, would never occur. In addition, as proof of the derivation of the parameter variety, we demonstrate the application of the proposed approach in the analysis of viral dynamics, which reveals an important fact on the efficacy of the drug that was missed
in a previous study \cite{science}.

\begin{remark}\label{remark:differentialalgebra}
Those familiar with differential algebra \cite{ritt}, particularly with structural identifiability analysis through differential algebraic approaches such as \cite{ljung1994} and \cite{meshkatn}, may know that the feasible parameters of unidentifiable models may form algebraic varieties. However, we demonstrate this formally, through its applicability in the modelling approach mentioned above. Besides, technically, this study is based on nondifferential algebra and algebraic geometry, considering that it may be preferable to not differentiate the indeterminants more than necessary, in view of the theory of differential equations. The following truncated differential ideal can be regarded as a non-differential one considering each of the derivatives as indeterminants:
\begin{align*}
&[\left\{ p\right\}] \cap K[x_1,\ldots, x_N, x_1^{(1)}, \ldots, x_N^{(1)}, \ldots, x_1^{(s)}, \ldots, x_N^{(s)}] \\
= &\langle \left\{ p\right\}, \{ p^{(1)}\}, \ldots, \{ p^{(s)}\} \rangle \subset K[x_1,\ldots, x_N, x_1^{(1)}, \ldots, x_N^{(1)}, \ldots, x_1^{(s)}, \ldots, x_N^{(s)}],
\end{align*}
where $\left\{ p\right\}$ is a finite set of differential polynomials for which the indeterminants are $x_1, \ldots, x_N$, and their derivatives with respect to time over the differential polynomial ring over field $K$ are $K\left\{x_1,\ldots, x_N \right\}$. $[\left\{ p\right\}]$ and $\langle \left\{ p\right\} \rangle $ are the differential and non-differential ideals generated by $\left\{ p\right\}$, respectively. The superscript numbers in brackets denote the order of the derivative. In terms of the algebraic geometry, the algebraic variety defined by ideal $[\left\{ p\right\}] \cap K[x_1,\ldots, x_N, x_1^{(1)}, \ldots, x_N^{(1)}, \ldots, x_1^{(L)}, \ldots, x_N^{(L)}]$, where $l \geq s$ is a subset of the variety defined by $\langle \left\{ p\right\}, \{ p^{(1)}\}, \ldots, \{ p^{(s)}\} \rangle$, clearly shows that the algebraic approach works for relaxing the infinite differentiation property of differential algebra.
\end{remark}

The remainder of the paper is organized as follows. In %Section \ref{sec:definition}
Section \ref{sec:definition}, the definitions related to the parameter variety are provided. In %Section \ref{sec:method}
Section \ref{sec:method}, 
we present a method that extracts an algebraic variety containing the parameter variety of an unidentifiable model given the observed data. This section is organized such that users can focus on the specific usage of the method without considering the technical details of the algebra. Further, the application of the proposed method in viral dynamic analysis is described in %Section \ref{sec:app}
Section \ref{sec:app}. In this section, in particular, our idea and its applicability are clarified through real world application.
In %Section \ref{sec:theory}
Section \ref{sec:theory}, the algebraic preliminaries and mathematical formulations of our problem are stated, followed by main theorems. In short, the parameter variety of a state-space model, given the observed time-series data, generally forms an algebraic variety.

\section{Definitions}
\label{sec:definition}
In this section, after describing the models and data to be considered, we
define the parameter variety. % is defined.
In this study, we deal with the following state-space models:
\begin{align}
\frac{\d x}{\d t} &= f(x,u;a), \label{eq:math_eq}\\
y &= g(x,u;a), \label{eq:out_eq}
\end{align}
%given partially known initial conditions of $y=y(T_0) =:y_0$,
where $x(t) \in \mathbb{R}^N$ is the state variable vector, $u(t) \in
\mathbb{R}^M$ is the input vector, $y(t) \in \mathbb{R}$ is the output,
and $a \in \mathbb{R}^n$ is the unknown parameter vector, where $N, M, n$ are positive integers. $u(t)$ and $y(t)$ are observed and denoted by
$\hu(t)$ and $\hy(t)$, when it is assumed that they are given as data. We
suppose that $u(t)$ is not constrained by \eqref{eq:math_eq},
\eqref{eq:out_eq}, and $x(t),u(t),y
(t)$ are sufficiently smooth as mentioned later in
%\ref{remark:smoothness}
Remark \ref{remark:smoothness}. \eqref{eq:math_eq} is the mathematical model
that describes the system under consideration, where the existence and
uniqueness of the solution are assumed. \eqref{eq:out_eq} is the
observation model. As the various types of mathematical models existing in
practice are described using polynomials \cite{Murray, glv}, we assume
that $f$ and $g$ are polynomial vectors in terms of $x, u$. These models
are defined for a duration $[T_0, T_1]$, where $T_0, T_1 \in \mathbb{R}$
and $T_0 < T_1$.% are time points that initial and final measurements are
taken respectively.
%As mentioned in Section \ref{sec:intro}, we focus on "unidentifiable" statespace
models, which we will define later.
\begin{comment}
\begin{remark}\label{remark:modelerr}
% As is clear from \eqref{eq:math_eq}, only continuous models are
concerned throughout this paper.
Some of the modelling approaches allow modelling errors such as
disturbances in \eqref{eq:math_eq} \cite{nonlinearsys}; however, we assume
that there are no such errors in model \eqref{eq:math_eq} at least in this
study, following practical situations. There are various models that do
not deal with such errors explicitly in estimating their unknown
parameters as in, e.g., \cite{Murray, science, pbpk, meshkatn}. We explain
the measurement noise later.
\end{remark}
\end{comment}
Before presenting the details of the time-series data to be dealt with,
the concept of model identifiability \eqref{eq:math_eq}, \eqref{eq:out_eq}
is described in short. As reviewed in
\cite{miao}, there are various definition of identifiability depending on
the given models and data assumptions; they can be are mainly categorised
into two types: structural and practical identifiability. These two types
involve different assumptions on data, with and without noise. As we first
investigate the theoretical aspects of the models, the structural ones are
of interest here. For simplicity, identifiable indicates structurally
identifiable in the following. For example, the definition of
unidentifiable models introduced in Definition 2.5 in \cite{miao} is as
follows: 
\begin{definition}\label{dfn:unidentifiable}
Model \eqref{eq:math_eq}, \eqref{eq:out_eq} is unidentifiable, if for any
input $u(t)$, there exist two parameters $a_\alpha \neq a_\beta$ in
parameter space $A$ such that $y(u; a_\alpha) = y(u; a_\beta)$ holds.
If model \eqref{eq:math_eq}, \eqref{eq:out_eq} is not unidentifiable, it
is called identifiable.
\end{definition}

Refer to \cite{miao} for the details.
Generally, model identifiability is investigated a priori, before
estimating the parameters. After the model is revealed to be identifiable,
a system described by the model can be naively investigated, given well-estimated
parameters. However, if the model is unidentifiable, system
investigation becomes difficult due to the non-uniqueness of the feasible
parameters. Therefore, if a model is shown to be unidentifiable,
modifications on the model are at times recommended in order
%things to do
to avoid non-uniqueness \cite{miao, meshkatl, zvi,chappell, evans}. %Thismay be rather suitable especially when model modifications are impossible
However, it is impossible or undesirable to modify models or data
measurements in certain applications, for example, if physiological-based
models \cite{pbpk} are used. Thus, we adopt a different approach in this
study; instead of checking the identifiability, which is one of properties
of the model, the model parameters complying with the observed data %,i.e., parameters that satisfy \eqref{eq:unidentifiable},
are investigated and therefore, enumerated comprehensively.

Now, we describe the assumption on the time series data used in this
study, in detail. We assume that the output data are measured only at a
finite number of time points in $[T_0, T_1]$, which differs from the
assumption generally applied in structural identifiability analysis. This
is because it may be difficult to obtain continuous measurements in
practice. Our purpose is to investigate the parameters of
\eqref{eq:math_eq}, \eqref{eq:out_eq} such that its inputs and output are
coincident with the observed data. However, in order to specify such
parameters under an algebraic framework, information on the higher
derivatives of the inputs and output of the model is required.
Accordingly, we also assume that a set of time derivatives of the input
and output data are obtained, particularly for constructing the
theoretical basis for our approach.
Note that this assumption can be relaxed as explained in
%Section \ref{sec:derivative_appprox}
Section \ref{sec:derivative_appprox}, in practice. More precisely, we investigate
the feasible parameters of model \eqref{eq:math_eq}, \eqref{eq:out_eq}
under the assumption that a set of derivatives of the input and output
data exist at $t_i \in [T_0, T_1]$.
\begin{align}
\hu &\in \mathbb{R}^{(M\times L)\times (N+1)},
\quad \hu= \left(\hu_1, \ldots, \hu_M \right)^T, \hu_m(i,j) =
\hu_m^{(j)}(t_i), m = 1,\ldots, M, \label{eq:in_data}\\
\hy &\in \mathbb{R}^{L\times (N+1)},\quad \hy(i,j) =
\hy^{(j)}(t_i)\label{eq:out_data}
\end{align}
%and derivatives of input data
%\begin{align}\label{eq:in_data}
%U_\d(t) = \left(u_\d(t), \ldots, u_\d^{(N)}(t) \right)^T\in
%\mathbb{R}^{M\times (N+1)}
%\end{align}
%over $[T_0, T_1]$
are observed. % denoted as $U_\d$, is observed over $[T_0, T_1]$.
%\begin{remark}
$L$ %appeared
in \eqref{eq:in_data}, \eqref{eq:out_data} indicates the number of
required measured time points in order to specify the parameter variety,
which is defined later. $L$ is determined by examining \eqref{eq:math_eq},
\eqref{eq:out_eq}, which surprisingly has a theoretical bound. See
%\ref{remark:K}
Remark \ref{remark:K} for details on $L$.
%\end{remark}
\begin{definition}\label{dfn:N-pv}
Suppose that time-series data \eqref{eq:in_data}, \eqref{eq:out_data}
corresponding to the input and output of state-space model
\eqref{eq:math_eq}, \eqref{eq:out_eq} are observed:
\begin{align}
\label{eq:model_data}
%\begin{aligned}
U_\d(t), t_i \in [T_0, T_1],&\quad {u}^{(j)}(t_i) = \left( \hu_1(i,j),\ldots,
\hu_M(i,j)\right)^T, y^{(j)}(t_i) = \hy(i,j) 
%\end{aligned},
\end{align}
where $i = 1,\ldots, L, j = 1,\ldots, N$.
We define the parameter variety
%"the set parameters
of the model given data $\hu$ and $\hy$ as the set of parameters with
which the model provides $u$ and $y$ coincident with the data.
\end{definition}
There appear to be no derivatives for $u, y$ in \eqref{eq:math_eq},
\eqref{eq:out_eq}; however, they exist in the differentials of the model,
which are demonstrated to be necessary for determining the parameter
variety. See %Section \ref{sec:formulation}
Section \ref{sec:formulation}, particularly %\ref{dfn:N-pv2}
Definition \ref{dfn:N-pv2}, for a
more precise definition. In this study, we extract the parameter variety
based on the framework of commutative algebra and algebraic geometry, and
thereby, show that it generally forms an algebraic variety.

\section{Proposed method}\label{sec:method}% and its significance in modelling approach}
In this section, a method for extracting the parameter variety of \eqref{eq:math_eq}, \eqref{eq:out_eq}, given data \eqref{eq:in_data}, \eqref{eq:out_data}, is proposed.% with an emphasis on the significance in its application for analysis of systems described by the model given the data.

\begin{remark}\label{remark:modeller}
Our method can be beneficial for modellers, who may not familiar with %differntial algebra nor
commutative algebra or algebraic geometry as illustrated in %Section \ref{sec:app}
Section \ref{sec:app}. Therefore, this section is organized for enabling readers to focus on the specific usage of our method. See %Section \ref{sec:theory}
Section \ref{sec:theory} for the theoretical details on the method.
 \end{remark}

The parameter variety of state-space model \eqref{eq:math_eq}, \eqref{eq:out_eq}, given time-series data \eqref{eq:in_data}, \eqref{eq:out_data} is investigated using Algorithm \ref{algorithm:alg1}, which outputs an algebraic variety containing the parameter variety. The algorithm involves two main steps:
\begin{enumerate}
    \item In step 1, \eqref{eq:math_eq} and \eqref{eq:out_eq} are manipulated in order to inspect the input-output relationships of the models. Thus, by eliminating the state variables, a certain equation that represents the input-output relationships is extracted from \eqref{eq:math_eq}, \eqref{eq:out_eq}.
    \item In steps 2 and 3, \eqref{eq:in_data} and \eqref{eq:out_data} are assigned to the equation obtained in step 1; thereby, an algebraic variety containing the parameter variety is extracted.
\end{enumerate}

\begin{remark}
Step 1 of Algorithm \ref{algorithm:alg1}, i.e., computation of an equation representing the input-output relationships of \eqref{eq:math_eq}, \eqref{eq:out_eq}, is the same as the one applied in structural identifiability analysis based on differential algebra, such as \cite{ljung1994}, \cite{meshkatn}, \cite{Saccomani_2003}. However, for the understanding of the reader, it is explained as follows.
\end{remark}

\begin{remark}
Technically, equations that represent the input-output relationships can be derived as a subset of the characteristic set for the differential ideal corresponding to \eqref{eq:math_eq}, \eqref{eq:out_eq} as explained in literature such as \cite{ljung1994} and \cite{Saccomani_2003}. On the other hand, \cite{forsman} presents a method for deriving such equations using the Gr\"{o}bner basis instead of the characteristic set, which is utilized in, for example, \cite{meshkatn}, \cite{Harrington_2016}, and \cite{ijrr2019}. We utilize the latter because it is sufficient for discussing the finite order of the derivatives. See Section %Section \ref{sec:formulation}
Section \ref{sec:formulation} for details on the relationship between differential algebra and our approach.
\end{remark}

\begin{algorithm}
{Derivation of an algebraic variety containing the parameter variety of \eqref{eq:math_eq}, \eqref{eq:out_eq}, given \eqref{eq:in_data}, \eqref{eq:out_data}}
\label{algorithm:alg1}
\begin{algorithmic}
\STATE{Input: Model \eqref{eq:math_eq}, \eqref{eq:out_eq}, time-series data \eqref{eq:in_data},\eqref{eq:out_data}}
\STATE{Output: Algebraic variety containing the parameter variety of \eqref{eq:math_eq}, \eqref{eq:out_eq} given \eqref{eq:in_data},\eqref{eq:out_data}}
\STATE{Step 1: Compute the basis of the input-output equations contained in the reduced Gr\"obner basis for the ideal generated by the sufficiently differentiated \eqref{eq:math_eq}, \eqref{eq:out_eq}}.
\STATE{Step 2: For each $t_i$, substitute data $\hu(t_i), \hy(i,j) (j = 1,\ldots, N)$ in input-output equation
\begin{align*}
    \Sigma_{l = 1}^L c_l(a)f_l(u,y,\dot{u}, \dot{y},\ldots, {u}^{(N)}, y^{(N)}) = f_{L+1}(u,y,\dot{u}, \dot{y},\ldots, {u}^{(N)}, y^{(N)}).
\end{align*}
$c_l(a)$ is the rational function of the parameters and $f_l(u,y,\dot{u}, \dot{y},\ldots, {u}^{(N)}, y^{(N)}) (l = 1,\ldots, L+1)$ is the differential monomial of $u, y$.}
\STATE{Step 3: Compute the values of the coefficients based on the input-output equation, denoted by $v_l (l=1,\ldots, L)$, by solving the linear simultaneous equations generated in the previous step}
\RETURN $\left\{ a \mid c_l(a) = v_l , l = 1,\ldots, L\right\}$
\end{algorithmic}
\end{algorithm}

\subsection{Step 1: Computation of the basis of the input-output equations of the model}\label{sec:step1}
In the field of control, the input-output relationships of linear state-space models are derived through methods such as the Laplace transformations. Thus, the transfer functions, which represent their input-output relationships, are investigated. As the transfer functions are uniquely determined from the models and represent the input-output behaviour, they are the main tools for inspecting the system properties in system theory. Through these functions, the input-output behaviour of models can be investigated because they do not contain information on the state variables.

As we deal with polynomial models \eqref{eq:math_eq}, \eqref{eq:out_eq}, which are basically nonlinear, algebraic manipulations are of concern, which is feasible for practical models using computational algebra. In particular, %due to derivatives of variables,
we introduce a commutative algebraic framework, which has been presented in \cite{forsman}, \cite{jirstrand}.

Under the algebraic framework, model \eqref{eq:math_eq}, \eqref{eq:out_eq} is regarded as a set of polynomials, namely, the polynomials of $x,u,y$ and their derivatives, such as
\begin{align}
    \dot{x}_1 -f_1(x,u;a), \ldots, \dot{x}_N -f_N(x,u;a), y-g(x,u;a). %\textcolor{red}{eq:generator}
    \label{eq:generator}
\end{align}
%\textcolor{red}{label{eqgenerator}}
Approximately, the ideal generated by \eqref{eq:generator}, % denoted as $\Sigma$,
is a set of polynomials obtained through the addition and multiplication of \eqref{eq:generator}. Unfortunately, elimination of the state variables and their derivatives may not be possible through the algebraic manipulation of \eqref{eq:generator}. As an example, we consider a model constituted by \eqref{eq:lv1}, where $x_2$ and $x_1$ are replaced by input $u$ and state variable $x$ and \eqref{eq:lvout}. Representing the model using polynomials, we obtain 
\begin{align}
    \frac{\d x}{\d t} - a_1x\left(1-\frac{x}{a_2}-a_3\frac{u}{a_2}\right), y - x. \label{eq:simple_competition}
\end{align}
Obviously, it is impossible to eliminate ${\d x}/{\d t}$ from \eqref{eq:simple_competition}. However, considering $\d y/\d t-\d x/\d t$, which is the derivative of the second polynomial in \eqref{eq:simple_competition}, both $x$ and $\d x/\d t$ can be eliminated:
\begin{align*}
    \frac{\d y}{\d t} - a_1y\left(1-\frac{y}{a_2}-a_3\frac{u}{a_2}\right).
\end{align*}
%\begin{align*}
%    \frac{\d^2 x}{\d t^2} - a_1\frac{\d x}{\d t}\left(1-\frac{x}{a_2}-a_3\frac{u}{a_2}\right) -a_1x\left(-\frac{1}{a_2}\frac{\d x}{\d t}-a_3\frac{1}{a_2}\frac{\d u}{\d t}\right), \frac{\d y}{\d t} - \frac{\d x}{\d t}. \label{eq:simple_competition}
%\end{align*}
The number of differentiations of \eqref{eq:simple_competition} needed to eliminate the state variable derivatives is bounded \cite{forsman}, \cite{jirstrand}, \cite{meshkatn}.
Hence, if we have \eqref{eq:math_eq}, \eqref{eq:out_eq} and their derivatives up to the required minimum, the state variables and their derivatives can be eliminated by algebraic manipulation. We refer to these equations that do not include state variables nor their derivatives as the input-output equations of \eqref{eq:math_eq}, \eqref{eq:out_eq}. Although such manipulations are not unique for elimination, it is possible to find a finite set of polynomials that enumerate all the input-output equations through %Gr\"obner
Gr\"{o}bner basis computations. The set is a subset of the Gr\"{o}bner basis without the state variables and their derivatives for the ideal generated by \eqref{eq:math_eq}, \eqref{eq:out_eq} and their derivatives up to required minimum. Therefore, the set can be regarded as the basis for all the input-output equations. According to \cite{forsman, jirstrand}, the set contains only one polynomial, which we refer to as the basis of the input-output equations of \eqref{eq:math_eq}, \eqref{eq:out_eq}. This basis can be computed using %\ref{algorithm:alg2}
Algorithm \ref{algorithm:alg2} proposed in \cite{forsman, jirstrand}. Termination of the algorithm is guaranteed.
\begin{algorithm}
%\caption{Derivation of the basis of the input-output equations of \eqref{eq:math_eq}, \eqref{eq:out_eq} and the minimal number of required differentiation}
Derivation of the basis of the input-output equations of \eqref{eq:math_eq}, \eqref{eq:out_eq} and the minimal number of required differentiation.
\label{algorithm:alg2}
\begin{algorithmic}
\STATE{Input: Model \eqref{eq:math_eq}, \eqref{eq:out_eq}}
\STATE{Output: Basis of the input-output equations of \eqref{eq:math_eq}, \eqref{eq:out_eq}}
\STATE $S := \phi, i := 1$
\WHILE{$S = \phi$}
\STATE{Let $I_i$ denote the ideal over polynomial ring $\mathbb{C}(a)[x,\ldots, x^{(i)}, y,\ldots, y^{(i)}, u, \ldots, u^{(i-1)}]$ that is generated the derivatives of \eqref{eq:math_eq}, \eqref{eq:out_eq} up to the $i$-th order}.
\STATE{Fix the monomial order defined on the ring as the lexicographic order for which the order of the variables is $x^{(i)} < \cdots x < \cdots < y^{(i)} < \cdots < y < u^{(i-1)} < \cdots < u$.}
\STATE{Compute the reduced Gr\"obner basis for $I_i$.}
\STATE{$S :=$ elements of the Gr\"obner basis without state variables and their derivatives.}
\STATE{$i := i + 1$}
\ENDWHILE{}
\STATE{$l := i$}
%\STATE{Divide $S$ by the coefficient of the monomial with the highest degree of the highest order derivative of of $y$ and let this be $S$}
\RETURN $S, l \leq N$
\end{algorithmic}
\end{algorithm}

\begin{remark}
In general, the Gr\"{o}bner basis for the ideal over the polynomial ring with respect to the monomial order is not unique. However, the reduced Gr\"{o}bner basis, which is a special type of Gr\"{o}bner basis, of the ideal is unique with respect to the order. Therefore, the reduced Gr\"{o}bner basis is computed to obtain the basis of the input-output equations as follows. See %Section \ref{sec:main}
Section \ref{sec:main} for the definition of the Gr\"{o}bner basis and the reduced Gr\"{o}bner basis.
\end{remark}

\begin{comment}
\begin{remark}
Division in the last step of Algorithm \ref{algorithm:alg2} is performed in order to obtain the basis of the input-output equations containing monomials, whose coefficients are constants. Through this operation, the basis of the input-output equation is forced to be unique, which is similar to that in \cite{Saccomani_2003, meshkatn}. \textcolor{red}{(division by parameters)}
\end{remark}
\end{comment}

In general, Gr\"obner basis computations can be performed using computer algebra software such as Singular, Mathematica, Magma, and Maple. An example of a series of Singular commands for computing the basis of the input-output equations of a model using %\ref{algorithm:alg2}
Algorithm \ref{algorithm:alg2} is shown below:

%Characteristic sets for $\Sigma$, which is a certain finite set of differential polynomials,
%characterize $\Sigma$ in a sense that they determine whether a given differential polynomial belongs to the ideal or not. Thanks to such property of characteristic sets for \eqref{eq:math_eq}, \eqref{eq:out_eq}, there is no longer need to concern dependency of, for example, eliminations of state variables on the way of manipulations of \eqref{eq:generator}.
%We call equations corresponding to differential polynomials in the characteristic set without state variables and their derivatives the input-output equations.

%Characteristic set for \eqref{eq:math_eq}, \eqref{eq:out_eq} can be computed by computer algebra software such as Maple \cite{Harrington_2016}, \cite{Boulier_de}.

\begin{example}
In this example, we consider \eqref{eq:lv1}, \eqref{eq:lv2}, \eqref{eq:lvout} and derive the basis of the input-output equations of the model; the corresponding Singular commands needed are shown. %An input-output equation of the model is derived via computation of the Gr\"obner basis for the ideal generated by a sufficiently differentiated model. Although characteristic sets of $\Sigma$ are not derived directly, the input-output equation contained in the set is obtained by this method. In order to compute the input-output equatoin, ones investigate how many times the model needs to be differentiated. According to \cite{forsman}, such number of differentiations is bounded by the number of state variables, and thus, it is guaranteed that the input-output equations can be obtained. %Technically speaking, this method applies non-differential algebra.

First, \eqref{eq:lvout} is differentiated:
\begin{align}
    \frac{\d }{\d t}y &= \frac{\d }{\d t}x_1. \label{eq:lvout1}
\end{align}
%Observing that the right hand side of the above only has the first derivative of $x_1$,
We then check whether the state variables and their derivatives can be eliminated from the ideal generated by
\begin{align}
      \frac{\d x_1}{\d t} = a_1x_1\left(1-\frac{x_1}{a_2}-a_3\frac{x_2}{a_2}\right), \frac{\d x_2}{\d t} = a_4x_2\left(1-\frac{x_2}{a_5}-a_6\frac{x_1}{a_5}\right),
    y = x_1,
        \frac{\d }{\d t}y = \frac{\d }{\d t}x_1 \label{eq:i1},
\end{align}
where \eqref{eq:lvout1} is added to \eqref{eq:lv1}, \eqref{eq:lv2}, and \eqref{eq:lvout}. The reduced Gr\"obner basis for the ideal generated by \eqref{eq:i1} with respect to lexicographic ordering $\dot{x}_2 > \dot{x}_1 > x_2 > x_1 > \dot{y} > y$ is computed as follows:
%At this time, the corresponding commands if ones use Singluar are shown.
\begin{lstlisting}[basicstyle=\ttfamily\footnotesize, frame=single]
ring r = (0,a1, a2, a3, a4, a5, a6), (x4, x3, x2, x1, y1, y0),lp;
ideal i1 = x3 + a1*x1*(x1/a2 + (a3*x2)/a2 - 1),
x4 + a4*x2*(x2/a5 + (a6*x1)/a5 - 1), y0 - x1, y1 - x3;
option(redSB);
groebner(i1);
\end{lstlisting}
\begin{comment}
> ring r = (0,a1, a2, a3, a4, a5, a6), (x4, x3, x2, x1, y1, y0),lp;
> ideal i1 = x3 + a1*x1*(x1/a2 + (a3*x2)/a2 - 1),
. x4 + a4*x2*(x2/a5 + (a6*x1)/a5 - 1), y0 - x1, y1 - x3;
> groebner(i1);
_[1]=x1-y0
_[2]=(a1*a3)*x2*y0+(a2)*y1+(a1)*y0^2+(-a1*a2)*y0
_[3]=x3-y1
_[4]=(a5)*x4+(a4)*x2^2+(a4*a6)*x2*y0+(-a4*a5)*x2
> option(redSB);
> groebner(i1);
_[1]=x1-y0
_[2]=(a1*a3)*x2*y0+(a2)*y1+(a1)*y0^2+(-a1*a2)*y0
_[3]=x3-y1
_[4]=(a1*a3*a5)*x4+(a1*a3*a4)*x2^2+(-a1*a3*a4*a5)*x2+(-a2*a4*a6)*y1+(-a1*a4*a6)*y0^2+(a1*a2*a4*a6)*y0.
\end{comment}
Note that $\dot{x}_2, \dot{x}_1, x_2, x_1, \dot{y}, y$ are considered non-differentiated variables and denoted by ``x4", ``x3", ``x2", ``x1", ``y1", and ``y0" in the above commands. ``Option(redSB)" forces us to compute the reduced basis. We obtain the following as the output:
\begin{lstlisting}[basicstyle=\ttfamily\footnotesize, frame=single]
_[1]=x1-y0
_[2]=(a1*a3)*x2*y0+(a2)*y1+(a1)*y0^2+(-a1*a2)*y0
_[3]=x3-y1
_[4]=(a1*a3*a5)*x4+(a1*a3*a4)*x2^2+(-a1*a3*a4*a5)*x2
+(-a2*a4*a6)*y1+(-a1*a4*a6)*y0^2+(a1*a2*a4*a6)*y0.
\end{lstlisting}
Obviously, there are no elements that do not contain state variables or their derivatives. Hence, as the second step, \eqref{eq:lvout1} is differentiated again.
%:
%\begin{align}
%    \frac{\d^2 }{\d t^2}y &= \frac{\d^2 }{\d t^2}x_1. \label{eq:lvout2}
%\end{align}
%Since $\frac{\d^2 }{\d t^2}x_1$ is not in the original model, it is formed by differentiating \eqref{eq:lv1}, which also gives $\frac{\d }{\d t}x_2$. Thus, we check whether state variables and their derivatives can be eliminated from
We then check whether the state variables and their derivatives can be eliminated from
\begin{align}
      \frac{\d x_1}{\d t} &= a_1x_1\left(1-\frac{x_1}{a_2}-a_3\frac{x_2}{a_2}\right), \frac{\d x_2}{\d t} = a_4x_2\left(1-\frac{x_2}{a_5}-a_6\frac{x_1}{a_5}\right),\nonumber \\ \frac{\d^2 x_1}{\d t^2} &= a_1\frac{\d}{\d t}x_1\left(1-\frac{x_1}{a_2}-a_3\frac{x_2}{a_2}\right) %a_1x_1\frac{\d}{\d t}\left(1-\frac{x_1}{a_2}-a_3\frac{x_2}{a_2}\right)\\
      -a_1x_1\left(\frac{1}{a_2}\frac{\d}{\d t}x_1+ \frac{a_3}{a_2}\frac{\d}{\d t}x_2\right),\nonumber \\
      \frac{\d^2 x_2}{\d t^2} &= a_4\frac{\d}{\d t}x_2\left(1-\frac{x_2}{a_5}-a_6\frac{x_1}{a_5}\right) %a_1x_1\frac{\d}{\d t}\left(1-\frac{x_1}{a_2}-a_3\frac{x_2}{a_2}\right)\\
      -a_4x_2\left(\frac{1}{a_5}\frac{\d}{\d t}x_2+ \frac{a_6}{a_5}\frac{\d}{\d t}x_1\right),\nonumber\\
    y &= x_1,
        \frac{\d }{\d t}y = \frac{\d }{\d t}x_1, \frac{\d^2 }{\d t^2}y - \frac{\d^2 }{\d t^2}x_1,\label{eq:competition_ideal}
\end{align}
in the same manner as before. By computing the reduced Gr\"obner basis for the ideal generated by the above equation with respect to lexicographic ordering $\ddot{x}_2> \ddot{x}_1> \dot{x}_2 > \dot{x}_1 > x_2 > x_1 >\ddot{y}> \dot{y} > y$, we obtain an element of the reduced Gr\"obner basis that does not contain a state variable nor its derivative assuming that $a_1, a_2, a_3, a_5 \neq 0$.%, which is the basis of the input-output equations of \eqref{eq:lv1},\eqref{eq:lv2},\eqref{eq:lvout}
:
\begin{align}
    &\frac{-a_1a_2a_3a_5-a_2^2a_4}{a_1a_2a_3a_5}{\left(\frac{\d }{\d t}y\right)}^2 +\frac{a_1^2a_3a_5+a_1a_2a_3a_4a_6-2a_1a_2a_4}{a_1a_2a_3a_5}\left(\frac{\d}{\d t}y\right)y^2 \nonumber \\
    +& \frac{2a_1a_2^2a_4-a_1a_2a_3a_4a_5}{a_1a_2a_3a_5}\left(\frac{\d}{\d t}y\right)y +\frac{a_1^2a_3a_4a_6-a_1^2a_4}{a_1a_2a_3a_5}y^4 \nonumber \\
    +&\frac{-a_1^2a_2a_3a_4a_6+2a_1^2a_2a_4-a_1^2a_3a_4a_5}{a_1a_2a_3a_5}y^3 +\frac{-a_1^2a_2^2a_4+a_1^2a_2a_3a_4a_5}{a_1a_2a_3a_5}y^2 \nonumber \\
    =& -\left(\frac{\d^2}{\d t^2}y\right)y \label{eq:ex2_ioeq}.
\end{align}
This is the basis of the input-output equations of \eqref{eq:lv1}, \eqref{eq:lv2}, \eqref{eq:lvout}. The minimal number of differentiations required, $L$, is revealed to be two, which is the same as the number of state variables in this case.
Finally, dividing by $a_1a_2a_3a_5$, which is the coefficient of the monomial containing the highest degree of $\d^2 y/\d t^2$, we obtain the basis of the input-output equations of \eqref{eq:lv1}, \eqref{eq:lv2}, \eqref{eq:lvout}, uniquely.
\begin{comment}
\begin{align*}
    &\left(\frac{\d^2}{\d t^2}y\right)y+\frac{-a_1a_2a_3a_5-a_2^2a_4}{a_1a_2a_3a_5}{\left(\frac{\d }{\d t}y\right)}^2 \nonumber \\
    &+\frac{a_1^2a_3a_5+a_1a_2a_3a_4a_6-2a_1a_2a_4}{a_1a_2a_3a_5}\left(\frac{\d}{\d t}y\right)y^2 + \frac{2a_1a_2^2a_4-a_1a_2a_3a_4a_5}{a_1a_2a_3a_5}\left(\frac{\d}{\d t}y\right)y\nonumber \\
    &+\frac{a_1^2a_3a_4a_6-a_1^2a_4}{a_1a_2a_3a_5}y^4 +\frac{-a_1^2a_2a_3a_4a_6+2a_1^2a_2a_4-a_1^2a_3a_4a_5}{a_1a_2a_3a_5}y^3\nonumber \\
    &+\frac{-a_1^2a_2^2a_4+a_1^2a_2a_3a_4a_5}{a_1a_2a_3a_5}y^2 = 0.
\end{align*}
\end{comment}
\end{example}

\subsection{Steps 2 and 3: Extraction of the algebraic variety containing the parameter variety}\label{sec:step2to3}
The basis of the input-output equations of \eqref{eq:math_eq}, \eqref{eq:out_eq} is, in general, as follows:
\begin{align}\label{eq:ioeq}
    \Sigma_{l = 1}^L c_l(a)f_l(u,y,\dot{u}, \dot{y},\ldots, {u}^{(L-1)}, y^{(L)}) = f_{l+1}(u,y,\dot{u}, \dot{y},\ldots, {u}^{(L-1)}, y^{(L)}),
\end{align}
where $c_l(a)$ is the rational function of the parameters and $f_l(u,y,\dot{u}, \dot{y},\ldots, {u}^{(L-1)}, y^{(L)}) (l = 1,\ldots, L)$ is a monomial of $y, \ldots, y^{(L)}, u, \ldots u^{(L-1)}$. $f_{L+1}(u,y,\dot{u}, \dot{y},\ldots, {u}^{(L-1)}, y^{(L)})$ denotes the polynomials of $y, \ldots, y^{(L)}, u, \ldots u^{(L-1)}$ with constant coefficients. As \eqref{eq:ioeq} is derived from \eqref{eq:math_eq}, \eqref{eq:out_eq}, %without considering specific time points,
it must be satisfied over $[T_0, T_1]$.

\begin{comment}
\begin{example}
The basis of the input-output equation after division in the last step of Algorithm \ref{algorithm:alg2} is as follows:
\begin{align*}
         &\frac{(-a_1a_2a_3a_5-a_2^2a_4)}{a_1a_2a_3a_5}{\left(\frac{\d }{\d t}y\right)}^2\nonumber \\
    &+\frac{(a_1^2a_3a_5+a_1a_2a_3a_4a_6-2a_1a_2a_4)}{a_1a_2a_3a_5}\left(\frac{\d}{\d t}y\right)y^2 + \frac{(2a_1a_2^2a_4-a_1a_2a_3a_4a_5)}{a_1a_2a_3a_5}\left(\frac{\d}{\d t}y\right)y\nonumber \\
    &+\frac{(a_1^2a_3a_4a_6-a_1^2a_4)}{a_1a_2a_3a_5}y^4 +\frac{(-a_1^2a_2a_3a_4a_6+2a_1^2a_2a_4-a_1^2a_3a_4a_5)}{a_1a_2a_3a_5}y^3\nonumber \\
    &+\frac{(-a_1^2a_2^2a_4+a_1^2a_2a_3a_4a_5)}{a_1a_2a_3a_5}y^2 = -\left(\frac{\d^2}{\d t^2}y\right)y, \quad a_1a_2a_3a_5 \neq 0.
\end{align*}
\end{example}
\end{comment}

 %Since the %If there is no differential monomial with constants, it is rearranged into the monic one as conducted in \cite{Mishra}, \cite{meshkatn}. More specifically, once every coefficient is devided by the coefficient of the monomial containing the highest order of highest derivatives $y^{(j)}$, $f_{K+1}(u,y,\dot{u}, \dot{y},\ldots, {u}^{(N)}, y^{(N)})$ becomes a non-zero polynomial.

In step 2, substituting \eqref{eq:in_data}, \eqref{eq:out_data} in \eqref{eq:ioeq} for each time point $t_i(i = 1,\ldots, K)$, simultaneous linear equations are obtained:
\begin{align}\label{eq:sys_ioeq}
&\left(
\begin{array}{ccc}
f_1(\hu(t_1), \hy(1,:))& \cdots& f_{L}(\hu(t_1), \hy(L,:))\\
\vdots& \ddots & \vdots\\
f_1(\hu(t_1), \hy(L,:))& \cdots & f_{L}(\hu(t_1), \hy(L,:))\\
\end{array}
\right) \left(
\begin{array}{c}
c_1(a)\\
\vdots \\
c_{L}(a)\\
\end{array}
\right) %\nonumber \\
%&
= \left(\begin{array}{c}
f_{L+1}(\hu(t_1), \hy(1,:))\\
\vdots \\
f_{L+1}(\hu(t_L), \hy(L,:))\\
\end{array}\right).
\end{align}
In practice, the derivatives of the time-series data may not be observed. Besides, the data may be noisy due to measurement errors. Thus, theoretically, it suffices to substitute \eqref{eq:in_data}, \eqref{eq:out_data} in \eqref{eq:ioeq}; however, when dealing with actual data, pre-processing of the data may be required. Hence, in %Section \ref{sec:derivative_appprox}
Section \ref{sec:derivative_appprox}, data pre-processing %an explanation on estimation of dervatives of output data where
is discussed assuming the input data are represented by sufficiently smooth functions.
\begin{remark}\label{remark:K}
Using time-series data observed at more than $L$ time points, simultaneous linear equations such as \eqref{eq:sys_ioeq} can be formed as well, although it may be better to do so in terms of the robustness against data noise. %, we leave this as future works.
In addition, we note that data need to be measured for at least $L$ time points, which is the number of monomials in \eqref{eq:ioeq} with non-constant coefficients in order to obtain a unique solution for \eqref{eq:sys_ioeq}. This restriction is not applicable, if data pre-processing is performed as explained in Section %Section \ref{sec:derivative_appprox}
Section \ref{sec:derivative_appprox}.
\end{remark}

In step 3, the above is solved in terms of $c_k(a)$, for which the solution is denoted by $v_k (k=1,\ldots, L)$.
%\begin{remark}
In this study, we assume that \eqref{eq:sys_ioeq} admits a unique solution.
%\end{remark}
Approximately, as \eqref{eq:ioeq} is the equation that the inputs, output, and derivatives of \eqref{eq:math_eq}, \eqref{eq:out_eq}, given the parameters, satisfy at least for the measured time points, the parameters of the parameter variety of \eqref{eq:math_eq}
, \eqref{eq:out_eq}, given \eqref{eq:in_data} ,\eqref{eq:out_data}, satisfy $\left\{ c_l(a)= v_l , l = 1,\ldots, L\right\}$. Thus, the output of Algorithm \ref{algorithm:alg1} is obtained:
\begin{align}
    \left\{ a \mid c_l(a) = v_l , l = 1,\ldots, L\right\},\label{eq:algebraicvariety}
\end{align} which contains the parameter variety. Although \eqref{eq:algebraicvariety} may include parameters that are not in the parameter variety, it is possible to extract such parameters, as shown in %\ref{thm:extention}
Theorem \ref{thm:extention} and demonstrated through an example in %\ref{ex:singular}
Example \ref{ex:singular}. Hence, the parameters in \eqref{eq:algebraicvariety} are generally in the parameter variety of \eqref{eq:math_eq}, \eqref{eq:out_eq}, given \eqref{eq:in_data}, \eqref{eq:out_data}. See Section %Section \ref{sec:theory}
Section \ref{sec:theory} for the theoretical justification of the proposed method.

\subsection{Data pre-processing %an explanation on estimation of dervatives of output data where
given input data as sufficiently smooth functions}\label{sec:derivative_appprox}%(If derivatives of data is nor observed)
%\textcolor{black}{
%It should be noted that the derivatives of input data \eqref{eq:in_data} coincident with verivatives of inputs where the inputs are assumed to be smooth enough over $[T_0, T_1]$. Hence,
There are two issues to consider in dealing with real-world data using the proposed method: measurement errors and derivatives of the data, which are not typically observed. Assuming the input data is given as smooth functions over $[T_0, T_1]$, practical methods to deal with the above mentioned issues are discussed below. The following two procedures are
considered for data pre-processing.

\begin{remark}
If input data are obtained at sufficient time points over $[T_0, T_1]$, they can be used to approximate a sufficiently smooth input function.  %Especially, if no input to \eqref{eq:math_eq}, \eqref{eq:out_eq} is given, which is a case shown in Section \ref{sec:app},
This suggests that the following data pre-processing would also work for such input data. %Estimations of derivatives of both input and output data are future works.
\end{remark}%}

As the actual data may include measurement noise, \eqref{eq:math_eq}, \eqref{eq:out_eq} typically does not fit the data perfectly. Therefore, the parameter variety of \eqref{eq:math_eq}, \eqref{eq:out_eq} given such data may not contain any parameters. On the other hand, in general, parameter estimations of \eqref{eq:math_eq}, \eqref{eq:out_eq} for such data are performed to decrease the residuals between such data and the model outputs as much as possible; thus, parameters that fit the data to a certain  extent are of interest \cite{Ramsay}. %Based on this, it may be better to obtain the parameter variety given pseudo data obtained by fitting \eqref{eq:math_eq}, \eqref{eq:out_eq} to the observed data \cite{Ramsay} which may not be exact due to measurement errors.
Therefore, it may be better to obtain the parameter variety given the input data and a trajectory of the model output fitted to the output data. We refer to this trajectory as pseudo-data, which may not exactly fit the output data at the measurement points due to measurement error.
As \eqref{eq:ioeq} describes the input-output relationships of the model, it can also be used for generating pseudo-data. The advantage of using \eqref{eq:ioeq} for pseudo-data generation is that the initial values of the state variables are not required. However, it may be difficult to solve \eqref{eq:ioeq} numerically, compared to the model in the state form, suggesting the application of \eqref{eq:math_eq}, \eqref{eq:out_eq} naively. The parameter variety of \eqref{eq:math_eq}, \eqref{eq:out_eq}, given pseudo-data, is guaranteed to have at least one parameter because %there exist a parameter that generates the pseudo data.
pseudo-data are generated by a parameter.

As data derivatives are not typically observed in practice, they need to be approximated. After obtaining pseudo-data as the numerical solution of \eqref{eq:math_eq}, \eqref{eq:out_eq} or \eqref{eq:ioeq}, higher derivatives of the pseudo-output data can be approximated through numerical differentiation methods using the pseudo-data. Note that pseudo-data at more than $L$ time points is guaranteed, if numerical computations of \eqref{eq:math_eq}, \eqref{eq:out_eq} or \eqref{eq:ioeq} are performed with a sufficiently small time-step size.

%%%%%%%%%%%%%%section A%%%%%%%%%

\section{Application of the proposed method in viral dynamics analysis}\label{sec:app}
In this section, we illustrate the working of the proposed method, through a viral dynamics analysis application example. After elucidating the model that describes the viral dynamics and the observed data, followed by parameter estimation through a conventional approach, we apply the proposed method, highlighting its significance in modelling.
\begin{remark}{
%As shown later, the output of Algorithm \ref{algorithm:alg1} may contain parameters that are not in the parameter variety of \eqref{eq:math_eq}, \eqref{eq:out_eq} given \eqref{eq:in_data},\eqref{eq:out_data}.
As mentioned in %Section \ref{sec:step2to3}
Section \ref{sec:step2to3}, the output of Algorithm \ref{algorithm:alg1} may contain parameters that are not included in the parameter variety. However, in order to emphatically illustrate our idea, i.e., the extraction of all the possible parameters, such atypical parameters are not considered in this section. We present a method to remove such parameters in %Section \ref{sec:theory}
Section \ref{sec:theory}, followed by an example using the same model in %\ref{ex:singular}
Example \ref{ex:singular}.
}
\end{remark}

\subsection{Model describing the viral dynamics and viral data}\label{sec:viralmodel}
In studies such as \cite{science}, the dynamics of the hepatitis C virus (HCV) during interferon-$\alpha$-2b(IFN) therapy is analysed using the following mathematical model:
%\begin{comment}
\begin{align}
\frac{\d x_1}{\d t} = a_1-a_2x_1-a_3x_1x_3, \frac{\d x_2}{\d t} = a_3x_1x_3-a_4x_2,
\frac{\d x_3}{\d t} = (1-a_5)a_6x_2-a_7x_3, \label{eq:virus0}
\end{align}
%\end{comment}
where $x_1, x_2, x_3$ are time $t$ dependent variables and $a_1, \ldots, a_7$ are time-invariant parameters. $x_3$ is measured as $y = x_3$.
More precisely, $x_1, x_2, x_3$ are the number of target cells, number of productively infected cells, and the viral load, respectively. Target cells are produced at rate $a_1$ and die at a constant death rate $a_2$. They become infected at a constant infection rate $a_3$. The infected cells produce HCV at an average rate of $a_6$ and die at a constant rate $a_4$ per cell per day. Virions are cleared at a constant clearance rate $a_7$. IFN reduces the production of virions from the infected cells by a fraction $(1-a_5)$, where $0 \leq a_5 \leq 1$. It is to be noted that $a_1 \geq 0, \ldots, a_7 \geq 0$. Further, based on \cite{science}, the model is rearranged assuming that the number of target cells $x_1$ is constant for at least two weeks. In addition, the value of $x_1(0)$ is assumed to be equal to the value of $x_1$ at the steady state of model \eqref{eq:virus0}, where $a_5 = 0$, indicating that IFN has no effect at $t < 0$. Based on quasi-steady state analysis, it is shown that $x_1(t) = \frac{a_4a_7}{a_3a_6}$ and $x_2(T_0) = (a_7/a_6)x_3(T_0), x_3(T_0)$, where $T_0$ denotes the delay in viral decay. For details on the steady state analysis, refer to \cite{science}. Thus, the rearranged model is obtained as follows:
\begin{align}
\frac{\d x_2}{\d t} &= \frac{a_4a_7}{a_6}x_3-a_4x_2, %\label{eq:m2x2}\\
\frac{\d x_3}{\d t} = (1-a_5)a_6x_2-a_7x_3, %\label{eq:m2x3}\\
y = x_3, \label{eq:virous}\\
x_2(T_0) &= (a_7/a_6)x_3(T_0)\label{eq:initcond},
\end{align}
where the parameters and variables correspond to those of model \eqref{eq:virus0}. This is defined over $[T_0, T_1]$, where $T_1 = 24\times 14 = 336$ h, at which the last measurement is made. Obviously, there is no input to this model. In the next subsection, the proposed method is applied to \eqref{eq:virous}, given the time-series data explained below. Although \eqref{eq:virous} is linear and therefore, appears simple, it should be noted that the proposed method is applicable to non-linear models because it is based on algebra.

In \cite{science}, the time-series data of the viral load is observed for three subjects, denoted by 1-H, 2-D, and 3-D, as shown in %\ref{fig:data}
Fig. \ref{fig:data} with specific symbols.
\begin{figure}
\centering
\includegraphics[width=0.7\textwidth, clip]{./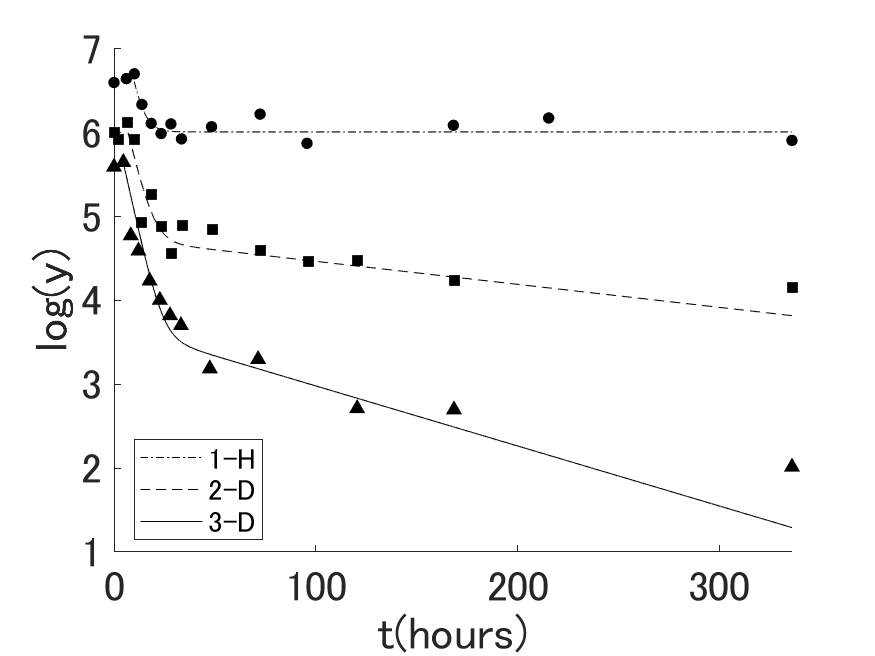}
\caption{Observed time-series data for three subjects (denoted by $1-\mathrm{H}$, $2-\mathrm{D}$, and $3-\mathrm{D}$), for whom $5\times10^6$, $10\times10^6$, and $15\times10^6$(IU) of IFN were administrated, respectively. The data were extracted from the images in \cite{science}. The horizontal-axis represents the time in date, whereas the vertical axis represents the logarithm of the viral load (HCV RNA copies /ml)
with base 10. The data are plotted using subject-specific symbols ($1-\mathrm{H}$:circle, $2-\mathrm{D}$:square, $3-\mathrm{D}$:triangle). %Each white symbol is the initial condition of the model \eqref{eq:virous} for each subject, which is estimated in \cite{science}.
The lines indicate the numerical solutions, given parameter $a_4,a_5, a_7$ shown in %\ref{tab:neumann_parameter}
Table \ref{tab:neumann_parameter} over $[T_0, T_1]$, where $T_1 = 24\times 14 = 336$ h. $T_0$ and $y(T_0)$ are assumed to be $(10,7,5)$ h and $(4.1, 1.0, 0.4)\times 10^6$ copies /ml for $1-H,2-D,and 3-D$, respectively according to \cite{science}.}
\label{fig:data}
\end{figure}
In the original study, $a_4,a_5, and a_7$ are estimated using the non-linear least squares methods. %based on Levenberg-Marquardt algorithm
$a_6$ is not estimated because it does not appear in the exact solution of $y = x_3$ as shown in %Section \ref{sec:exact}
Section \ref{sec:exact}. %The standard error of estimations are evaluated by a bootstrap method \cite{Efron_1986}. Unfortunately, the initial parameters for least square methods are not declaired in \cite{science}.
%100 different initial parameters and the averages of estimated parameters are computed, which are shown in \ref{tab:neumann_parameter}.
\begin{table}[htb]
 \caption{%Means and standard errors of
 Estimated parameters for $1-\mathrm{H}$, $2-\mathrm{D}$, and $3-\mathrm{D}$ in \cite{science}. %The standard errors in persentages are presented in brackets followed by the means.
 %See \cite{science} for the details of the estimations.
 %$E$ is the error between the model output and the observed data, which is defined as \eqref{eq:error}.
 }
\begin{center}
  \begin{tabular}{|c|c|c|c|c|c|c|c|c|c|c|c|} \hline
Subject &  $a_4$ & $a_5$ & $a_7$ \\%&$E$\\ \hline \hline %& $t_{\rm{c}}$ & $k_{\rm{a}}$ & $k_{\rm{e}}$ \\
\hline \hline
%1-H &0(0.01) &0.75(0.1)  &6.9(0.2)\\
%2-D& 0.16(0.04) &0.95(0.1)  &5.6(0.5)\\
%3-D&0.4(0.05) &0.99(0.2) &6(0.8) \\ \hline
1-H &0 &0.75  &6.9\\
2-D& 0.16 &0.95  &5.6\\
3-D&0.4 &0.99 &6 \\ \hline
\end{tabular} \label{tab:neumann_parameter}
   \end{center}
\end{table}
The numerical solutions, given the estimated parameters in Table %\ref{tab:neumann_parameter}
\ref{tab:neumann_parameter}, are denoted by lines in %\ref{fig:data}
Fig. \ref{fig:data}.

%\begin{remark}
%In numerical experiments of \cite{science}, the derivatives of time-series data is not observed nor concerned. On the other hand, our method needs them. Therefore, comparing estimated parameters is appearently not fair. However, the point is that the proposed method may be able to enhance results obtained through conventional method such as \cite{science}. For that reason in the next subsection the results obtained in \cite{science} are partially reused.
%\end{remark}

\subsection{Application of the proposed method to the viral model}\label{sec:viralmodel_pv}
In this section, the proposed method is applied to the model and data described in the previous section. Based on the methods described in Section \ref{sec:step1}, the basis of the input-output equations of \eqref{eq:virous} is derived as follows:
\begin{align}
&a_{4}a_5a_7y+\left(a_{4}+a_{7}\right)\frac{\d y}{\d t} = -\frac{\d^2 y}{\d t^2}. \label{eq:virous_ioeq}
\end{align}
 %By letting the coefficients of the equation \eqref{eq:virous_ioeq} be $d_1, d_2$ respectively, we obtain the manifold
%\begin{align}
%d_1 = a_4a_5a_7, \ d_2 = a_4 + a_7 \ %\nonumber \\
%\Leftrightarrow \ d_1 = a_4a_5(d_2-a_4). \label{eq:mf2}
%\end{align}
%Thus, the model is shown to be unidentifiable.
The values of the coefficients of \eqref{eq:virous_ioeq}, which are denoted by $v_1, v_2$, are uniquely determined by substituting output data $y_\d, \frac{\d y_\d}{\d t}, \frac{\d^2 y_\d}{\d t^2}$ at time points $t_1, t_2$ in equation \eqref{eq:virous_ioeq} as follows:
\begin{align}
  &\left\{
    \begin{array}{l}
      \left.v_1y(t_1)+v_2\left.\frac{\d y}{\d t}\right|_{t = t_1} = -\frac{\d^2 y}{\d t^2}\right|_{t = t_1} \\
     \left. v_1y(t_2)+v_2\left.\frac{\d y}{\d t}\right|_{t = t_2} = -\frac{\d^2 y}{\d t^2}\right|_{t = t_2}
    \end{array}
  \right. \nonumber \\
  &\Leftrightarrow \left(
    \begin{array}{c}
      v_1 \\
      v_2
    \end{array}
  \right) =
 % -{\left(
-\begin{pmatrix}
 %   \begin{array}{cc}
      y(t_1) & \left.\frac{\d y}{\d t}\right|_{t = t_1} \\
      y(t_2) & \left.\frac{\d y}{\d t}\right|_{t = t_2}
 %   \end{array}
    \end{pmatrix}^{-1}
  %\right)}^{-1}
  \left(
    \begin{array}{c}
      \left.\frac{\d^2 y}{\d t^2}\right|_{t = t_1} \\
      \left.\frac{\d^2 y}{\d t^2}\right|_{t = t_2}
    \end{array}
  \right), \label{eq:find_v}
\end{align}
where the above matrix is assumed to be regular. Thereby, $v_1,v_2$ can be estimated and the algebraic variety containing the parameter variety is obtained:
\begin{align}\label{eq:pv_virus}
    \left\{ a \mid a_4a_5a_7 = v_1, a_4 + a_7 = v_2 \right\}
    % a_7 = -a_4 + v_2
    % a_5 = v_1/(a_4a_7) = -v_1/(a_4(a_4-v_2))
    % a_4 = -a_7 + v_2
    % a_5 = -v_1/(a_7(a_7-v_2)),
\end{align}
if ideal data are observed.
\begin{comment}
 &\Leftrightarrow \left(
    \begin{array}{cc}
      y(t_1) & \left.\frac{\d y}{\d t}\right|_{t = t_1} \\
      y(t_2) & \left.\frac{\d y}{\d t}\right|_{t = t_2}
    \end{array}
  \right)\left(
    \begin{array}{c}
      d_1 \\
      d_2
    \end{array}
  \right) = -\left(
    \begin{array}{c}
      \left.\frac{\d^2 y}{\d t^2}\right|_{t = t_1} \\
      \left.\frac{\d^2 y}{\d t^2}\right|_{t = t_2}
    \end{array}
  \right).\nonumber \\
\end{comment}
%where time $t_1, t_2$ can be chosen appropriately in the way that
%the above equation
%the equation \eqref{eq:dd}
%has a unique solution.
\begin{comment}
\begin{align}
\left|{y(t_1)\left.\frac{\d y}{\d t}\right|_{t = t_2}-y(t_2)\left.\frac{\d y}{\d t}\right|_{t = t_1}}\right| \neq 0.
\end{align}
\end{comment}

%In \cite{science}, a single set of parameters for each subject is estimated as shown in \ref{tab:neumann_parameter}. On the other hand, the proposed method estimates the parameter variety of \eqref{eq:virous} given time-series data.
Next, considering the measurement noise and the lack of derivatives for the output data, %, which are not observed in \cite{science}, %are required in our method, %they , and hence,
%they need to be approximated.
data pre-processing is performed. We use the parameters estimated in \cite{science}, shown in Table \ref{tab:neumann_parameter}, for generating pseudo-data. In this case, it is possible to compute the exact solution of $y$ for \eqref{eq:exact} given the parameters in Table \ref{tab:neumann_parameter}; hence, $\frac{\d y}{\d t}, \frac{\d^2 y}{\d t^2}$ at arbitrary time points in $[T_0, T_1]$ can also be computed analytically. In order to compute $v_1, v_2$, two time points $(t_1, t_2) = (1.8594, 6.1602)$ are randomly selected. The generated pseudo-data are presented in Table \ref{tab:estimated_v}.

\begin{table}[htb]
   \caption{Generated pseudo-data, given the parameters in Table \ref{tab:neumann_parameter}. For randomly selected time points $(t_1,t_2) =(1.8594, 6.1602)$, $y,\dot{y},\ddot{y}$ are computed using \eqref{eq:exact}.}
\begin{center}
  \begin{tabular}{|c|c|c|c|c|c|c|c|c|c|c|} \hline
Subject &  $(y(t_1), \dot{y}{(t_1)}, \ddot{y}(t_1))$ & $(y(t_2), \dot{y}{(t_2)}, \ddot{y}(t_2))$  \\%&$E$\\ \hline \hline %& $t_{\rm{c}}$ & $k_{\rm{a}}$ & $k_{\rm{e}}$ \\
\hline \hline
%1-H &0(0.01) &0.75(0.1)  &6.9(0.2)\\
%2-D& 0.16(0.04) &0.95(0.1)  &5.6(0.5)\\
%3-D&0.4(0.05) &0.99(0.2) &6(0.8) \\ \hline
1-H &$(1.0251, -0.0010,0.0070)\times 10^6$ &$(1.0250,-0.0000, 0.0000)\times 10^6$  \\
2-D& $(4.1781,-0.7127,0.5485)\times 10^4$ &$(2.1676,-0.3290,0.0499)\times 10^4$  \\
3-D&$(2.4049,-1.0615,1.0792)\times 10^3$ &$(434.9356,-172.1117,68.1076)$ \\ \hline
\end{tabular} \label{tab:estimated_v}
   \end{center}
\end{table}
\begin{comment}
We applied the central difference for the numerical approximation of these values at randomly selected $t_1, t_2$ and obtain, for example, (b)
$
y(t_1) \simeq 17.7306 \times 10^3,
%\left.
\frac{\d y}{\d t}(t_1) %\right|_{t = t_1}
\simeq -2.6911 \times 10^3,
%\left.
\frac{\d^2 y}{\d t^2}(t_1)
%\right|_{t = t_1}
\simeq 69.1216$, %\mathrm{e}3 \\
$y(t_2) \simeq 39.6348 \times 10^3, %\mathrm{e}3 &
%\left.
\frac{\d y}{\d t}(t_2)
%\right|_{t = t_2}
\simeq -6.1401 \times 10^3%\mathrm{e}3
$,
$%\left.
\frac{\d^2 y}{\d t^2}(t_2)
%\right|_{t = t_2}
\simeq 4.1728 \times 10^3$.
%       4.1728 \times 10^3 %\mathrm{e}3
\end{comment}

Substituting the pseudo-data in \eqref{eq:find_v}, we obtain $v_1, v_2$ for $1-H, 2-D, 3-D$ as follows, in order:
 \begin{align}
% \hspace{-1zh}
 \left(
    \begin{array}{c}
      v_1 \\
      v_2
    \end{array}
  \right)
%  & = -{\left(
%    \begin{array}{cc}
%      17.7306 \times 10^3 & %\mathrm{e}3 &
%      -2.6911 \times 10^3 \\%\mathrm{e}3 \\
%%     39.6348 \times 10^3 & %\mathrm{e}3 &
%     -6.1401 \times 10^3 %\mathrm{e}3
%    \end{array}
%  \right)}^{-1} \notag \\
%  & \qquad \left(
%%    \begin{array}{c}
%       69.1216\\
%       4.1728 \times 10^3 %\mathrm{e}3
%    \end{array}
%  \right)\nonumber \\
 &\simeq \left(
    \begin{array}{cc}
    0.0000 \\
   6.9000
    \end{array}
  \right),
  \left(
    \begin{array}{cc}
    0.8512 \\
   5.7600
    \end{array}
  \right),
  \left(
    \begin{array}{cc}
    2.3760 \\
   6.4000
    \end{array}
  \right)
  . \label{eq:2dd}
 \end{align}
Substituting the above in \eqref{eq:ioeq}, the algebraic varieties containing the parameter variety of \eqref{eq:math_eq}, \eqref{eq:out_eq}, given the pseudo-data shown in Table \ref{tab:estimated_v}, are obtained. The variety for $2-D, 3-D$ is visualized in Fig. \ref{fig:pv_2_3}.

 \begin{figure}[t]
\centering
\includegraphics[scale=0.25]{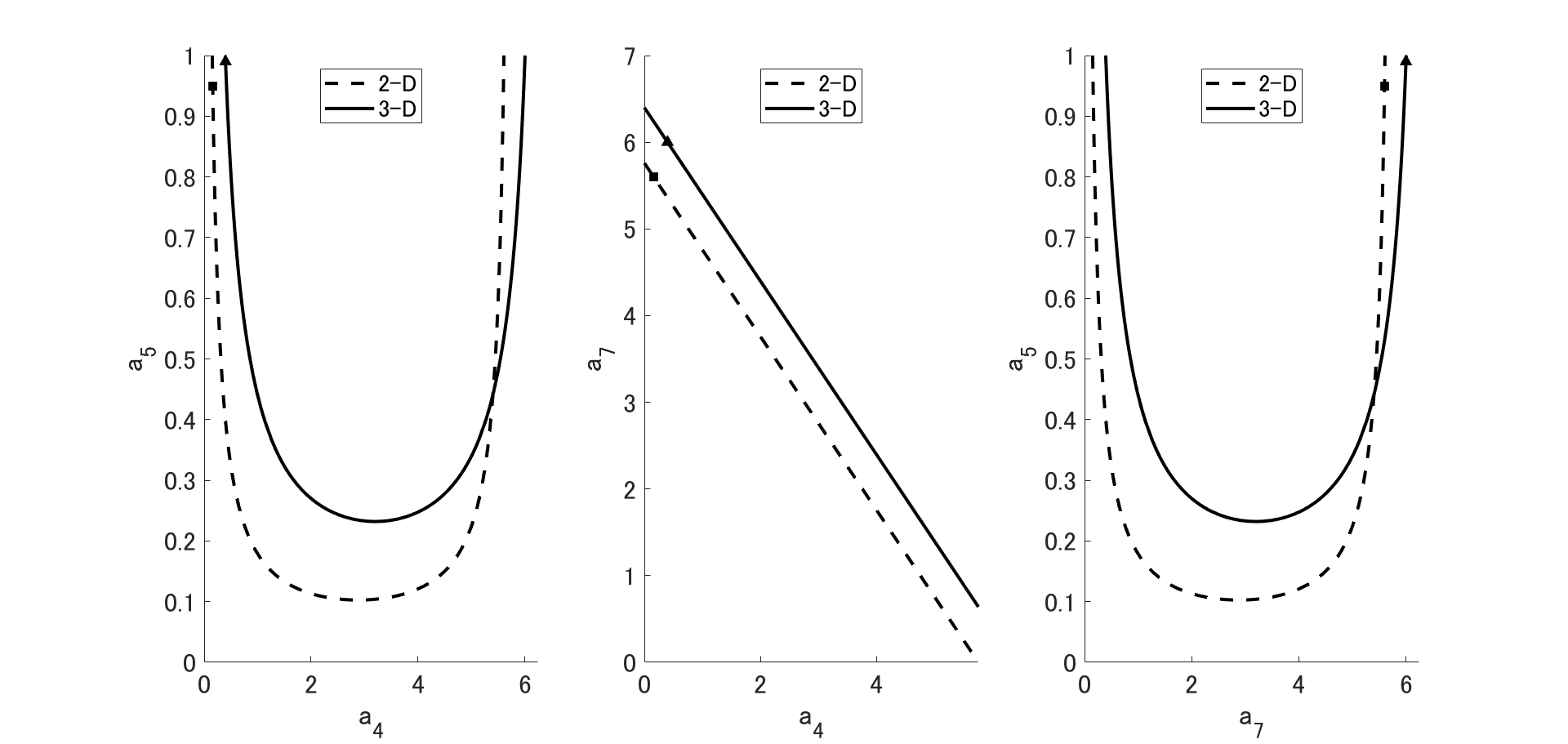}
\caption{Algebraic variety containing the parameter variety of \eqref{eq:virous}, given the pseudo-data for $2-D$ and $3-D$. The symbols on the variety are the estimated parameters shown in Table \ref{tab:neumann_parameter}. The varieties are projected into two- dimensional parameter spaces: $(a_4,a_5), (a_4,a_7),(a_7,a_5)$. Following the assumption on $a_5$, the varieties are plotted, where $0 \leq a_5 \leq 1$.\label{fig:pv_2_3}}
\end{figure}

%\begin{comment}
\subsection{Analysis of the varieties}%\label{sec:app}
There are several findings on the model and the given time-series data based on the estimated varieties.

First, the range of some of the parameters are revealed. The constraint equation \eqref{eq:pv_virus} can be rearranged as
 \begin{align*}
a_5 = -\frac{v_1}{a_4(a_4-v_2)}, a_7 = -a_4 + v_2,
 \end{align*}
if $a_4 \neq 0, a_4 \neq v_2$. For $2-D, 3-D$, it is apparent that $a_4 > 0$ from the first constraint of \eqref{eq:pv_virus}. As $a_5 \geq 0$ according to the model assumption and $v_1 > 0$ from the estimated results, $a_4 \leq v_2$ should hold. Thus, the possible ranges of $a_4$ and $a_7$ are revealed through the estimation of the varieties as $0 < a_4, a_7 < v_2$.

Next, the constraints defining the variety for $1-H$ are explicitly described as
\begin{align}
a_4a_5a_7= 0, a_4 + a_7 = 6.9. %\mathrm{e}-6.
\label{eq:pv_1}
\end{align}
Thus, if $a_4 = 0$ or $a_7 = 0$, we obtain constraint $a_7 = 6.9$ or $a_4 = 6.9$, respectively, and $a_5$ is not constrained. On the other hand, if $a_5 = 0$, $a_4 + a_7 = 6.9$ should hold. This implies that the variety for $1-H$ is structurally different from those of both $2-D$ and $3-D$. According to \cite{science}, the dynamics of the fitted curve for $1-H$ is significantly different from those for $2-D$ and $3-D$ as shown in Fig. \ref{fig:data}, which is supposed to reflect the difference in the assigned regiments as mentioned in \cite{science}; this corresponds to our result. Hence, by checking the geometry of the varieties determined from the data, the abnormal dynamics of the model may be captured.

Finally, we revisit the results obtained in \cite{science}. $a_4 = 0$ implies that the infected cells live longer than at least the measured duration. On the other hand, $a_5 = 0$ indicates that IFN does not work at all for the subjects. Therefore, the data can be explained not only with the extremely low death rate of the infected cells as mentioned in \cite{science}, but also with the invalidity of IFN, even though the parameters estimated in the original study are correctly located in the estimated varieties. Technically, in the original study, $a_5$ and $a_4$ are estimated in order, resulting in a possible value of zero for $a_4$. On the other hand, our method successfully extracted a new insight on the dynamics of the system, which was hidden in the data and overlooked by conventional approaches.

\section{Theoretical aspects of the proposed method}\label{sec:theory}
\subsection{Mathematical formulations}\label{sec:formulation}
In order to investigate model \eqref{eq:math_eq}, \eqref{eq:out_eq}, given observed time-series data \eqref{eq:in_data}, \eqref{eq:out_data}, the input-output relationships of the model are first examined. The investigation of the input-output relationships involves differential algebraic manipulations, as performed in Section \ref{sec:step1}, and is equivalent to the investigation of the differential ideal generated by the model.
\begin{definition}
Let $\mathcal{R}$ be a ring.
The derivation is a map $\partial: \mathcal{R} \to \mathcal{R}$ such that
\begin{itemize}
\item $\partial (r+s) = \partial(r) + \partial(s)$ for all $r, s \in \mathcal{R}$
\item and $\partial (r \cdot s) = r \cdot \partial(s) + \partial(r) \cdot s$ for all $r, s \in \mathcal{R}$.
\end{itemize}
\end{definition}
\begin{definition}
Differential ring $\mathcal{R}$ is a commutative ring with derivatives $\partial:\mathcal{R} \rightarrow \mathcal{R}$. % and $\mathcal{R}$ is closed under the derivations.
\end{definition}
\begin{definition}\label{def:differentialideal}
The differential ideal $I$ of differential ring $\mathcal{R}$ equipped with derivation operators $\partial$ is an ideal that is closed under the derivations.
\end{definition}

The differential ideal generated by model \eqref{eq:math_eq}, \eqref{eq:out_eq} is defined as follows:
\begin{definition}
Let $\mathbb{C}(a)$ be a field of rational functions of parameter vector $a$ with complex coefficients. Given state-space model \eqref{eq:math_eq}, \eqref{eq:out_eq}, where
\begin{align*}
x = \left(x_1, \ldots, x_N\right)^T, f(x,u;a) = \left(f_1(x,u;a),\ldots,\, f_N(x,u;a) \right)^T,
\end{align*}
 the state-space differential ideal of model \eqref{eq:math_eq}, \eqref{eq:out_eq} % of system
 is defined by the differential ideal generated by the model, i.e.,
\begin{align*}
\Sigma := [\dot{x}_1 -f_1(x,u;a), \ldots, \dot{x}_N -f_N(x,u;a), y-g(x,u;a)] \subset \mathbb{C}(a)\left\{x_1, \ldots, x_N, u, y\right\}.
\end{align*}
\end{definition}
From an algebraic point of view, the input-output relationships to be investigated %through differential algebraic manipulations
are in a specific subset of the above ideal:
\begin{definition}
The input-output differential ideal of model \eqref{eq:math_eq}, \eqref{eq:out_eq}
is defined by the following sub-ideal of state-space differential ideal $\Sigma$,
\begin{align}\label{eq:iodiffideal}
\Sigma_{\mathrm{IO}} := \Sigma \cap  \mathbb{C}(a)\left\{u,y\right\}.%\subset \mathcal{Q}\left<a\right>\left\{x_1, \ldots, x_n\right\}.
\end{align}
\end{definition}

Under the framework of differential algebra \cite{ritt, kolchin}, the input-output relationships of \eqref{eq:math_eq}, \eqref{eq:out_eq} are investigated, for example, in structural identifiability analysis \cite{fliess, ljung1994, Saccomani_2003}. However, as mentioned in Remark \ref{remark:differentialalgebra}, infinite derivative operations of the model may not be plausible in practice. Therefore, instead of following a differential algebraic framework, we restrict ourselves to a non-differential (i.e., standard) algebraic framework; i.e., instead of considering $\Sigma, \Sigma_{\mathrm{IO}}$, we consider the non-differential ideals obtained by truncating them such that higher derivatives of the variables more than necessary do not appear. Through such truncation, the input-output relationships of \eqref{eq:math_eq}, \eqref{eq:out_eq} can be investigated without considering the higher derivatives of the variables than required. As mentioned in Section \ref{sec:step1}, there exists an upper bound on the number of differentiations required to obtain at least one element in $\Sigma_{\mathrm{IO}}$. This is guaranteed by the following proposition \cite{forsman, jirstrand, meshkatn}:
\begin{prop}
For model \eqref{eq:math_eq}, \eqref{eq:out_eq}, if ideal $\IN$ is defined as
		\begin{align*}
		 \IN &=
			\langle \frac{\d x}{\d t} - f,\ldots,\frac{\d^N x}{\d t^N} - f^{(N-1)}, \ldots y-g, \ldots, y^{(N)}-g^{(N)} \rangle
		\\
		& \subset \mathbb{C}(a)[x,\ldots, x^{(N)}, y, \ldots, y^{(N)}, u, \ldots, u^{(N)}],
	\end{align*}
	$\IN \cap \mathbb{C}(a)[y, \ldots, y^{(N)}, u, \ldots, u^{(N)}]$ is not a zero ideal.
\end{prop}
Note that $\IN$ is the ideal obtained by truncating $\Sigma$ up to the $N$-th order derivatives of $x,y$ and $u$.
Through Algorithm \ref{algorithm:alg2}, the minimal number of differentiations required to obtain equations that do not contain state variables nor their derivatives for the model is computed \cite{jirstrand}.
Denoting this integer as $L \leq N$,
%\begin{definition}
%is defined as follows :
with a sufficiently differentiated \eqref{eq:math_eq}, \eqref{eq:out_eq},
%introduce the following definitions :
\begin{align}
    %I_l := \langle &
    \begin{aligned}
    &x^{(1)} - f(x,u; a) ,\ldots, x^{(L)} - \frac{\d^{(L-1)}}{\d t^{(L-1)}}f(x,u;a),\\
    &y-g(x,u;a), \ldots, y^{(L)}-\frac{\d^{(L)}}{\d t^{(L)}}g(x,u;a) \label{eq:diff_model}
    %\rangle
    \end{aligned},
\end{align}
where $L$ is a minimal integer that is an output of Algorithm \ref{algorithm:alg2}, by which an ideal generated by \eqref{eq:diff_model} contains at least one element. The ideal generated by \eqref{eq:diff_model} is denoted as follows:
\begin{align}
\begin{aligned}
    \Il := \langle &x^{(1)} - f(x,u; a) ,\ldots, x^{(L)} - \frac{\d^{(L-1)}}{\d t^{(L-1)}}f(x,u;a), \\
    &y-g(x,u;a), \ldots, y^{(L)}-\frac{\d^{(L)}}{\d t^{(L)}}g(x,u;a) \rangle \\
    & \subset \mathbb{C}(a)[x,\ldots, x^{(L)}, y, \ldots, y^{(L)}, u, \ldots, u^{(L)}]. \label{eq:diff_model_ideal}
    \end{aligned}
\end{align}%\leq l \leq N$
%\end{definition}

\begin{remark}\label{remark:smoothness}
This implies that the assumption on the smoothness of the solutions of \eqref{eq:math_eq}, \eqref{eq:out_eq} is relaxed to $C^L$ functions for $x,y$ and $C^{L-1}$ functions for $u$.
\end{remark}

Based on the above, we provide the algebraic counterpart of Definition \ref{dfn:N-pv}:
\begin{definition}\label{dfn:N-pv2}
Suppose that time-series data \eqref{eq:in_data}, \eqref{eq:out_data} corresponding to the input and output of state-space model \eqref{eq:math_eq}, \eqref{eq:out_eq} are observed.
We define the parameter variety of the model, given data $\hu$ and $\hy$, by the set of parameters with which %the model gives $u$ and $y$ coincident with the data.
\begin{align*}
    &\forall t_i \in [T_0, T_1], \exists \left(x(t_i), \ldots, x^{(L)}(t_i), %y_\d(i,0),\ldots, y_\d(i,j), {u_\d}_1(i,0),\ldots,{u_\d}_M(i,j)
    y(t_i), \ldots, y^{(L)}(t_i), u(t_i), \ldots, u^{(L)}(t_i)
    \right)^T \in \mathbb{C}^{(N+1+M) \times L}
\end{align*}
such that \eqref{eq:diff_model} and \eqref{eq:model_data} hold, where $i = 1,\ldots, L$ and $L$ is the minimal number of differentiations required.
%there exist $\left(x, \ldots, x^{(L)},y,\ldots, y^{(L)}, u,\ldots,u^{(L)}\right)^T \in \mathbb{C}^{(N\times L)}$ on $V(I_l)$ that satisfies
\end{definition}

\begin{definition}
The input-output equations of \eqref{eq:math_eq}, \eqref{eq:out_eq} are defined as equations corresponding to the polynomials in the ideal defined as follows:
\begin{align*}
    \Il \cap \mathbb{C}(a)[y,\ldots, y^{(L)}, u, \ldots, u^{(L)}].
\end{align*}
\end{definition}

\begin{comment}
\begin{remark}
The Gr\"obner basis of %an ideal that is formed when the input-output equation is appeared in the above method
the ideal generated by \eqref{eq:diff_model} with respect to lexicographic ordering ${x}^{(L)} > \cdots> x > {y}^{(L)} > \cdots > y > u^{(L)} > \cdots > u$ is utilized to remove the parameters that are not included in the parameter variety of the output of Algorithm \ref{algorithm:alg1}. If the user does need not consider such parameters, a method based on the Rosenfeld-Gr\"obner basis may suffice. See \textcolor{red}{???} for details.
\end{remark}
\end{comment}

\begin{comment}
Substituting the second equation and its derivative in the first one, we get
\begin{align*}
    \frac{\d y}{\d t} = a_1y\left(1-\frac{y}{a_2}-a_3\frac{u}{a_2}\right),
\end{align*}
which does not contain $x_1$. For such simple models, it appears easy to obtain equations that represent the input-output relationships; however, %if it's not the case,
manipulations to eliminate the state variables become complicated depending on the models, and different equations may be obtained depending on the manipulations.
\end{comment}

\subsection{Main theorems} \label{sec:proof}\label{sec:main}
Before presenting the main theorems, certain definitions and a proposition are introduced.
\begin{definition}
For ideal $I$ on polynomial ring $K[x_1, \ldots, x_n]$ over field $K$ with a given monomial order, if its generator $G = \left\{ g_1,\ldots, g_s \right\}$ satisfies
 \begin{center}
 $f \in I \Leftrightarrow f \mathrm{\ is \ divisible \ by\ } G,$,
 \end{center}
 $G$ is called the Gr\"obner basis of $I$.
\end{definition}

\begin{definition}
A set of monomials $\left\{u_1,\ldots, u_t\right\}$ appears in a non-zero polynomial $f = a_1u_1 +\cdots + a_tu_t$ of $K[x]$, %=K[x_1,x_2,\ldots, x_n]$,
where $0 \neq a_i \in K (i = 1,\ldots, t)$ is called the support. The largest monomial in the support of polynomial $f$ with respect to a given monomial order $<$ of ring $K[x]$ is denoted by ${\rm{in}}_<(f)$.
The Gr\"obner basis $\left\{ g_1, \ldots, g_s\right\}$ of ideal $I$ on polynomial ring $K[x]$ with respect to monomial order $<$ is the reduced Gr\"obner basis, if the followings are satisfied:
\begin{itemize}
\item The coefficient of ${\rm{in}}_<(g_i)$ of polynomial $g_i$ is 1 ($1 \leq i \leq s$)
\item and if $i \neq j$, no monomial in the support of $g_j$ is divisible by ${\rm{in}}_<(g_i)$.
\end{itemize}
\end{definition}
The reduced Gr\"obner basis of ideal $I$ on polynomial ring $K[x]$ with respect to a given monomial order is unique \cite{Cox}. The basis of the input-output equations of \eqref{eq:math_eq}, \eqref{eq:out_eq}, represented as \eqref{eq:ioeq}, is the reduced Gr\"obner basis of $\Il$ with respect to lexicographic order $x^{(i)} > \cdots > x > y^{(i)} > \cdots > y > u^{(i)} > \cdots > u$. The following is called the elimination theorem \cite[p. 122]{Cox}.
\begin{prop}\label{prop:elim}
Suppose that lexicographic ordering is used as the monomial ordering in a ring of polynomials $K[x_1, \ldots, x_n]$ over field $K$ such that $x_1 > x_2 > \ldots > x_n$.
If $G$ is reduced, the Gr\"obner basis of ideal $I \subset K[x_1,\ldots,x_n]$ for every $1 \leq j \leq n$, $G \cap K[x_{j},\ldots,x_n]$ is the Gr\"obner basis of $I \cap K[x_j, \ldots, x_n]$.
\end{prop}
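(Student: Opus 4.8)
The plan is to carry out the standard proof of the elimination theorem from \cite{Cox}, whose only non-routine ingredient is a monotonicity property peculiar to the lexicographic order. Fix $j$ with $1 \le j \le n$ and abbreviate $I_j := I \cap K[x_j,\dots,x_n]$ and $G_j := G \cap K[x_j,\dots,x_n]$. Every element of $G_j$ lies both in $I$ and in $K[x_j,\dots,x_n]$, so $G_j \subseteq I_j$; it therefore suffices to show that $G_j$ is a Gr\"obner basis of $I_j$, which amounts to verifying that the leading monomial ${\rm in}_<(f)$ of every nonzero $f \in I_j$ is divisible by ${\rm in}_<(g)$ for some $g \in G_j$ (this divisibility property also yields $\langle G_j\rangle = I_j$, since then every $f\in I_j$ reduces to zero modulo $G_j$ and hence lies in $\langle G_j\rangle$).

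So let $f \in I_j$ be nonzero. Then $f \in I$, and since $G$ is in particular a Gr\"obner basis of $I$, there is a $g \in G$ with ${\rm in}_<(g) \mid {\rm in}_<(f)$. Because $f \in K[x_j,\dots,x_n]$, its leading monomial ${\rm in}_<(f)$ involves only the variables $x_j,\dots,x_n$, and hence so does its divisor ${\rm in}_<(g)$. What remains, and what I expect to be the only step requiring care, is to promote this statement about the leading monomial of $g$ to the whole polynomial $g$. Here the choice of ordering is decisive: under the lexicographic order with $x_1 > x_2 > \dots > x_n$, any monomial in which some $x_i$ with $i < j$ occurs to a positive power is strictly larger than every monomial built from $x_j,\dots,x_n$ alone, so if some monomial in the support of $g$ involved such an $x_i$ it would exceed ${\rm in}_<(g)$, a contradiction. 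Therefore $g \in K[x_j,\dots,x_n]$, i.e.\ $g \in G \cap K[x_j,\dots,x_n] = G_j$, and ${\rm in}_<(g) \mid {\rm in}_<(f)$ is exactly the divisibility required. Thus $G_j$ is a Gr\"obner basis of $I_j$.

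It remains only to observe that this conclusion uses nothing about $G$ beyond its being a Gr\"obner basis of $I$, so it is the lexicographic hypothesis, rather than the reducedness of $G$, that drives the argument. If $G$ is moreover the reduced Gr\"obner basis of $I$, as in the statement, then the two defining conditions of reducedness (unit leading coefficient of each element, and no monomial in the support of one element divisible by the leading monomial of another) are inherited verbatim by the subset $G_j$, so $G_j$ is in fact the reduced Gr\"obner basis of $I_j$, which is the precise content of the proposition.
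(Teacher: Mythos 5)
Your argument is correct: it is the standard textbook proof of the elimination theorem, and the paper itself offers no proof of Proposition~\ref{prop:elim}, simply citing it from \cite[p.~122]{Cox}, so there is nothing to diverge from. Your closing observation is also accurate on both counts: the lexicographic hypothesis (not reducedness) is what makes the argument work, while reducedness, when present, is inherited verbatim by $G \cap K[x_j,\ldots,x_n]$.
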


Now, we prove that the algebraic variety obtained using Algorithm \ref{algorithm:alg1} contains the parameter variety of \eqref{eq:math_eq}, \eqref{eq:out_eq}, given \eqref{eq:in_data}, \eqref{eq:out_data}. In the following, $\mathrm{V}(f_1,\ldots, f_s)$ denotes the algebraic variety defined as $f_1,\ldots, f_s \in K[x_1,\ldots, x_n]$:
\begin{align*}
    \mathrm{V}(f_1,\ldots,f_s) = \left\{ (a_1,\ldots, a_n) \in K^n \mid f_i(a_1,\ldots, a_n) = 0 \mathrm{\ for\ all\ } 1 \leq i \leq s \right\}.
\end{align*}

\begin{theorem}
  %Suppose that there is no dependency between $ u, u^{(1)}, \ldots \in \Sigma$.
  The parameter variety of \eqref{eq:math_eq}, \eqref{eq:out_eq}, given \eqref{eq:in_data}, \eqref{eq:out_data}, is a subset of $\left\{ a \mid c_l(a) = v_l, l = 1,\ldots, L\right\}$ obtained through Algorithm \ref{algorithm:alg1}.
\end{theorem}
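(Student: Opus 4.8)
The plan is to unpack the definition of the parameter variety (Definition~\ref{dfn:N-pv2}) and show directly that any parameter $a$ in it must satisfy the linear system solved in Step~3 of Algorithm~\ref{algorithm:alg1}. So suppose $a$ belongs to the parameter variety. By definition, for each measurement time $t_i$ ($i=1,\ldots,L$) there exist values $\bigl(x(t_i),\ldots,x^{(L)}(t_i),y(t_i),\ldots,y^{(L)}(t_i),u(t_i),\ldots,u^{(L)}(t_i)\bigr)$ that simultaneously satisfy the truncated model relations \eqref{eq:diff_model} and agree with the data \eqref{eq:model_data}. The key observation is that \eqref{eq:ioeq}, the basis of the input-output equations, is an element of the elimination ideal $\Il \cap \mathbb{C}(a)[y,\ldots,y^{(L)},u,\ldots,u^{(L)}]$ by Proposition~\ref{prop:elim} (the elimination theorem applied to the reduced Gr\"obner basis of $\Il$ with respect to the stated lexicographic order). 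Hence \eqref{eq:ioeq} is a $\mathbb{C}(a)[x,\ldots,u^{(L)}]$-linear combination of the generators of $\Il$, i.e. of the polynomials in \eqref{eq:diff_model}.

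The next step is to evaluate that algebraic identity at the point furnished by the definition. Since every generator in \eqref{eq:diff_model} vanishes at $\bigl(x(t_i),\ldots,u^{(L)}(t_i)\bigr)$ for the given $a$, so does \eqref{eq:ioeq}; that is,
\begin{align*}
\sum_{l=1}^L c_l(a)\, f_l\bigl(u(t_i),y(t_i),\dot u(t_i),\dot y(t_i),\ldots,u^{(L-1)}(t_i),y^{(L)}(t_i)\bigr) = f_{L+1}\bigl(u(t_i),y(t_i),\ldots,y^{(L)}(t_i)\bigr).
\end{align*}
But by \eqref{eq:model_data} the arguments here are exactly the observed data $\hu(t_i)$, $\hy(i,j)$, so for each $i$ this is precisely the $i$-th row of the linear system \eqref{eq:sys_ioeq} with the unknowns $c_l(a)$ in place of $v_l$. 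Thus the vector $\bigl(c_1(a),\ldots,c_L(a)\bigr)^\top$ solves \eqref{eq:sys_ioeq}. Under the running assumption (stated just before the theorem) that \eqref{eq:sys_ioeq} admits a unique solution, that solution is the vector $(v_1,\ldots,v_L)^\top$ computed in Step~3, and therefore $c_l(a)=v_l$ for all $l=1,\ldots,L$. This shows $a \in \{a \mid c_l(a)=v_l,\ l=1,\ldots,L\}$, which is the claim.

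I expect the main obstacle to be the bookkeeping around two technical points rather than any deep idea. First, one must be careful that \eqref{eq:ioeq} really is a $\mathbb{C}(a)$-coefficient combination of the generators of $\Il$ that can be legitimately evaluated at a complex point: the $c_l(a)$ are rational functions of $a$, so strictly one works on the locus where the relevant denominators (the coefficients cleared in the last division step of Algorithm~\ref{algorithm:alg2}, e.g. $a_1a_2a_3a_5$ in the Lotka--Volterra example) are nonzero, and one should note that this is the implicit domain on which the parameter variety and the output set \eqref{eq:algebraicvariety} are both being compared. Second, one needs the existence clause of Definition~\ref{dfn:N-pv2} to supply, for each $t_i$, a genuine point of $\mathrm{V}(\Il)$ over $\mathbb{C}$ whose $u,y$-coordinates match the data and their derivatives; feeding these $L$ points into the single identity coming from elimination is what produces the $L$ equations. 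Once these are pinned down, the argument is just ``vanishing on the variety is inherited by ideal members, then uniqueness of the linear solve.'' A concluding remark could point out that the reverse inclusion fails in general, which is exactly why Algorithm~\ref{algorithm:alg1} returns a variety \emph{containing} the parameter variety and why Theorem~\ref{thm:extention} is needed to trim the spurious components.
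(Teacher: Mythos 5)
Your proposal is correct and follows essentially the same route as the paper: take a parameter in the parameter variety, use the elimination theorem to place the input--output basis \eqref{eq:ioeq} in the elimination ideal of $\Il$ (the paper phrases the evaluation step via the inclusion $\pi_{y,\ldots,u^{(L)}}(V)\subseteq \mathrm{V}(J)$ from the closure theorem, which is the same ``ideal members vanish at common zeros'' fact you invoke), substitute the data at each $t_i$ to obtain the linear system, and conclude by the assumed uniqueness of its solution. Your explicit remarks about the denominators of the rational coefficients $c_l(a)$ and about where the existence clause of Definition~\ref{dfn:N-pv2} is used are welcome clarifications of points the paper leaves implicit, but they do not change the argument.
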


\begin{proof}
We denote $V \subset \mathbb{C}^{(N+1+M)\times L}$ as the set of common roots of \eqref{eq:diff_model}. Let $J$ be an ideal such that $J = \Il \cap \mathbb{C}(a)[y,\ldots, y^{(L)},u,\ldots, u^{(L)}]$. Suppose that the projection map
\begin{align*}
    \pi_{y,\ldots, y^{(L)},u,\ldots, u^{(L)}}:\mathbb{C}^{(N+1+M)\times L} \rightarrow \mathbb{C}^{(1+M)\times L} %(V) \subseteq V
\end{align*}
sends $\left(x, \ldots, x^{(L)},
    y, \ldots, y^{(L)}, u, \ldots, u^{(L)}\right)^T$ to $\left(y, \ldots, y^{(L)}, u, \ldots, u^{(L)}\right)^T$.
According to p. 129 in \cite{Cox}%\cite[p. 129]{Cox}
, the following holds:
\begin{align}\label{eq:closure_thm}
   \pi_{y,\ldots, y^{(L)},u,\ldots, u^{(L)}}(V) \subseteq \mathrm{V}(J). %= \mathrm{V}(C_\mathrm{IO})
\end{align}
Let $h$ be the basis of the input-output equations of \eqref{eq:math_eq}, \eqref{eq:out_eq}, i.e., \eqref{eq:ioeq}, which is the output of Algorithm \ref{algorithm:alg2}. According to the algorithm, $h$ is obtained as an element of the reduced Gr\"obner basis of the sufficiently differentiated \eqref{eq:math_eq}, \eqref{eq:out_eq}, $\Il$, with respect to lexicographic ordering $x^{(L)}>\ldots> x> y^{(L)}> \ldots >y> u^{(L)}>\ldots> u$. Hence, based on
the elimination theorem (Proposition \ref{prop:elim}), it follows that
\begin{align*}
    \mathrm{V}(J)= \mathrm{V}(\langle h \rangle).
\end{align*}
Therefore, for all $p \in \mathrm{V}(J)$, $p \in \mathrm{V}(\langle h \rangle) = V(h)$, i.e., $p$ satisfies $h = 0$. Based on \eqref{eq:closure_thm}, the following holds:
\begin{align}\label{eq:satisfy_ioeq}
    \forall q \in \pi_{y,\ldots, y^{(L)},u,\ldots, u^{(L)}}(V)\subseteq
    \mathrm{V}(J),\quad q \,\mbox{satisfies}\, h = 0.
\end{align}

Let $\alpha \in \mathbb{C}^n$ be a parameter in the parameter variety of \eqref{eq:math_eq}, \eqref{eq:out_eq}, given \eqref{eq:in_data}, \eqref{eq:out_data}, and $V_\alpha \subset \mathbb{C}^{(N+1+M)\times L}$ is the set of common roots of \eqref{eq:diff_model} given $\alpha$.
\begin{comment}
From the definition and \eqref{eq:satisfy_ioeq}, $\alpha$ satisfies
\begin{align}
y_\m(t_1;\alpha)&=Y_\d(1,0),\ldots, y_\m(t_K;\alpha) = Y_\d(K,0),\ldots,\\
y_\m^{(N)}(t_1;\alpha) &= Y_\d(1,N), \ldots, y_\m^{(N)}(t_K;\alpha) = Y_\d(K,N) \label{eq:out_data_eq},
\end{align}
where $y_\m^{(j)}(t_k;\alpha)$ is the $j$-th derivative of output $y$ of \eqref{eq:math_eq}, \eqref{eq:out_eq} at $t = t_k$, given $a = \alpha$. %and input $u(t) = u_d(t)(\forall t \in [T_0, T_1])$ %with initial conditions $x(0)$.
Also note that $\forall t \in [T_0, T_1]$ and the following holds:
\begin{align*}
    \left(u(t),\ldots, u^{(N)}(t)\right)^T = U_\d(t). \label{eq:input_data_eq}
\end{align*}
\end{comment}
Based on Definition \ref{dfn:N-pv2}, the common roots of %following holds :
\begin{align}
    %I_l := \langle &
    \begin{aligned}
    &x^{(1)} - f(x,u; \alpha) ,\ldots, x^{(L)} - \frac{\d^{(L-1)}}{\d t^{(L-1)}}f(x,u;\alpha),\\
    &y-g(x,u;\alpha), \ldots, y^{(L)}-\frac{\d^{(L)}}{\d t^{(L)}}g(x,u;\alpha) \label{eq:diff_model_alpha}
    %\rangle
    \end{aligned}
\end{align}
such that \eqref{eq:model_data} holds are in $\pi_{y,\ldots, y^{(L)},u,\ldots, u^{(L)}}(V_\alpha)$. Therefore, for each $t_i \in [T_0, T_1] (i =1,\ldots, K)$, with a fixed $\alpha$, the following holds :
\begin{align*}
    &\Sigma_{l = 1}^L c_l(\alpha)f_l(\hu(t_i),\hy(t_i),\dot{\hu}(t_i), \dot{\hy}(t_i),\ldots, {\hu}^{(N)}(t_i), \hy^{(N)}(t_i)) \\
    &= f_{L+1}(\hu(t_i),\hy(t_i),\dot{\hu}(t_i), \dot{\hy}(t_i),\ldots, {\hu}^{(N)}(t_i), \hy^{(N)}(t_i)).
\end{align*}
This leads to simultaneous linear equations
\begin{align}
\label{eq:sys_ioeq_alpha}
&\left(
\begin{array}{ccc}
f_1(\hu(t_1), \hy(1,:))& \cdots& f_{L}(\hu(t_L), \hy(L,:))\\
\vdots& \ddots & \vdots\\
f_1(\hu(t_1), \hy(L,:))& \cdots & f_{L}(\hu(t_1), \hy(L,:))\\
\end{array}
\right) \left(
\begin{array}{c}
c_1(\alpha)\\
\vdots \\
c_{L}(\alpha)\\
\end{array}
\right) %\nonumber \\
%&
= \left(\begin{array}{c}
f_{K+1}(\hu(t_1), \hy(1,:))\\
\vdots \\
f_{K+1}(\hu(t_K), \hy(K,:))\\
\end{array}\right).
\end{align}
As the matrix on the left-side and the vector on the right-side of \eqref{eq:sys_ioeq_alpha} are independent of $\alpha$, %\eqref{eq:sys_ioeq_alpha} holds for all $\alpha$ in the parameter variety and
%a unique solution of
%Regarding
\eqref{eq:sys_ioeq_alpha} as a simultaneous linear equation of %, which is formed by $h$ where \eqref{eq:model_data} is substituted,
%regarding it as the equations of
${\left(c_1(\alpha), \ldots, c_L(\alpha) \right)}^T$ admits a unique solution, which is independent of $\alpha$. We denote this as $(v_1,\ldots, v_L)^T$. Thus, we obtain
\begin{align*}
     c_l(\alpha) = v_l, l = 1,\ldots, L.
\end{align*}
This concludes that $\alpha$ is in $\left\{ a \mid c_l(a) = v_l, l = 1,\ldots, L\right\}$.

\end{proof}

Next, the sufficient condition for the algebraic variety, obtained using Algorithm \ref{algorithm:alg1}, to be the parameter variety is presented by extending the so-called extension theorem \cite[p. 125]{Cox}: % into our framework, providing a method to get rid of parameters that are in the parameter variety but not in the algebraic variety derived through \ref{algorithm:alg1}.

\begin{comment}
Based on the extension theorem \cite[p. 169]{cox}, differential polynomials $g_1, \ldots, g_s \in \mathbb{C}(a)[y, \dot{y}, \ldots,y^{(i)},u,\dot{u},\ldots, u^{(i)},x_1,\ldots, x_N^{(i)}]$ can be computed such that
    \begin{align}\label{eq:extension}
    V(J) = \pi_{y,\ldots, y^{(L)}, u,\ldots, u^{(L)}}(V(\Il)) \cup  W_l%\left( V(J) \cap V(g_1, \ldots, g_l)\right),
\end{align}
where
\begin{align}\label{eq:singular}
W_l := V\left(%\Sigma \cap \mathbb{C}(a)[y,\ldots,y^{(L)}, u,\ldots, u^{(L)}]
J\right)
\cap V(g_1, \ldots, g_s).
\end{align}
This leads to the following theorem:
\end{comment}
\begin{prop}\label{prop:extension}
Let $I = \langle F_1,\ldots, F_s \rangle \subseteq \mathbb{C}[x_1,\ldots, x_n]$, and let $I_1$ be the first elimination ideal of $I$. For each $1 \leq i \leq s$, $F_i$ is expressed in the form
\begin{align*}
    F_i = p_i(x_2,\ldots, x_n)x_1^{D_i} + \mbox{(terms\, in\, which\,}\ x_1\ \mbox{has\, degree\,} < D_i),
\end{align*}
where $D_i \geq 0$ and $p_i \in \mathbb{C}[x_2,\ldots, x_n]$ are non-zero. Suppose that we have a partial solution $(\tilde{x}_2,\ldots, \tilde{x}_n) \in \mathrm{V}(I_l)$. If $(\tilde{x}_2,\ldots, \tilde{x}_n) \notin \mathrm{V}(p_1,\ldots, p_s)$, there exists $\tilde{x}_1 \in \mathbb{C}$ such that $(\tilde{x}_1,\tilde{x}_2,\ldots, \tilde{x}_n) \in \mathrm{V}(I)$.
\end{prop}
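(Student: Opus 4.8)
The plan is to recognise the statement as the Extension Theorem of \cite[p.~125]{Cox} and to reproduce its proof, which rests on the theory of resultants; I outline the argument. First I would substitute the partial solution into the generators, setting $\phi_i(x_1) := F_i(x_1, \tilde{x}_2, \ldots, \tilde{x}_n) \in \mathbb{C}[x_1]$, so that the desired conclusion $(\tilde{x}_1, \tilde{x}_2, \ldots, \tilde{x}_n) \in \mathrm{V}(I)$ becomes exactly the assertion that $\phi_1, \ldots, \phi_s$ have a common root $\tilde{x}_1 \in \mathbb{C}$. The hypothesis $(\tilde{x}_2, \ldots, \tilde{x}_n) \notin \mathrm{V}(p_1, \ldots, p_s)$ supplies an index with $p_j(\tilde{x}_2, \ldots, \tilde{x}_n) \neq 0$; after renumbering so that $j = 1$, one checks that $\phi_1$ has degree exactly $D_1 \geq 1$ (if $D_1 = 0$ then $F_1 \in \mathbb{C}[x_2, \ldots, x_n] \cap I \subseteq I_1$, which would force $p_1(\tilde{x}_2, \ldots, \tilde{x}_n) = 0$), so $\phi_1$ is nonconstant with at least one root in $\mathbb{C}$; the real work is to locate a root common to all of the $\phi_i$.

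The engine of the proof is the resultant in $x_1$. For $F, G \in \mathbb{C}[x_2, \ldots, x_n][x_1]$ of positive $x_1$-degree, $\mathrm{Res}(F, G, x_1)$ is a polynomial combination $A F + B G$ lying in $\mathbb{C}[x_2, \ldots, x_n]$, hence in $I_1$ when $F, G \in I$, and so it vanishes at $(\tilde{x}_2, \ldots, \tilde{x}_n)$. To treat all $s$ generators simultaneously I would introduce fresh indeterminates $u_2, \ldots, u_s$ and form the generic combination $G := u_2 F_2 + \cdots + u_s F_s$. Writing $R := \mathrm{Res}(F_1, G, x_1)$ as a polynomial in $u_2, \ldots, u_s$ with coefficients in $\mathbb{C}[x_2, \ldots, x_n]$, each such coefficient lies in $\langle F_1, \ldots, F_s\rangle \cap \mathbb{C}[x_2, \ldots, x_n] = I_1$ (since $R = A F_1 + B \sum_{i \geq 2} u_i F_i$ is a combination of $F_1, \ldots, F_s$), hence vanishes at $(\tilde{x}_2, \ldots, \tilde{x}_n)$; thus $R|_{x = \tilde{x}}$ is the zero polynomial in $u$. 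Because the leading coefficient $p_1$ of $F_1$ in $x_1$ does not vanish at $(\tilde{x}_2, \ldots, \tilde{x}_n)$, the $x_1$-degree of $F_1$ is preserved under the specialization, and the specialization formula for resultants then gives $\mathrm{Res}\bigl(\phi_1, \sum_{i \geq 2} u_i \phi_i, x_1\bigr) \equiv 0$ in $u$. This resultant equals a nonzero constant times $\prod_{\beta}\bigl(\sum_{i \geq 2} u_i \phi_i(\beta)\bigr)$, the product taken over the roots $\beta$ of $\phi_1$ counted with multiplicity; a product of linear forms in $u$ that vanishes identically forces one factor to vanish identically, so some root $\beta$ of $\phi_1$ satisfies $\phi_i(\beta) = 0$ for all $i \geq 2$. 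Setting $\tilde{x}_1 = \beta$ gives $F_i(\tilde{x}_1, \ldots, \tilde{x}_n) = 0$ for every $i$, i.e.\ $(\tilde{x}_1, \ldots, \tilde{x}_n) \in \mathrm{V}(I)$.

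The main obstacle, which is the technical heart of the Extension Theorem, is the bookkeeping of $x_1$-degrees under the two substitutions: confirming that $\deg_{x_1} F_1$ is unchanged when $x_2, \ldots, x_n \mapsto \tilde{x}_2, \ldots, \tilde{x}_n$ (guaranteed by $p_1(\tilde{x}_2, \ldots, \tilde{x}_n) \neq 0$), and disposing of the degenerate cases in which $\deg_{x_1} G$ falls below $\max_{i \geq 2} D_i$ or some $\phi_i$ is identically zero — the latter happening exactly when $D_i = 0$, in which case $F_i \in I_1$ already and that $i$ imposes no constraint. Once the resultant is specialized with the correct degree conventions, the remaining step — passing from ``$R$ vanishes for every $u$'' to ``$\phi_1, \ldots, \phi_s$ share a root'' — is the pigeonhole argument on the finitely many roots of $\phi_1$ described above.
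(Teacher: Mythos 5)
The paper does not actually prove this proposition: it is imported verbatim as the Extension Theorem, with an explicit citation to \cite[p.~125]{Cox}, and is then used as a black box in the proof of Theorem \ref{thm:extention}. So there is no in-paper argument to compare against; the relevant benchmark is the textbook proof, and your outline reproduces it faithfully. The skeleton is right: the reduction of the claim to a common root of the specialized univariate polynomials $\phi_i$, the observation that $p_1(\tilde{x})\neq 0$ forces $D_1\geq 1$ (since $D_1=0$ would put $F_1$ into $I_1$ and make it vanish at the partial solution), the passage through the generic combination $u_2F_2+\cdots+u_sF_s$, the fact that the $u$-coefficients of $\mathrm{Res}(F_1,G,x_1)$ lie in $I_1$ and hence vanish at $(\tilde{x}_2,\ldots,\tilde{x}_n)$, the compatibility of the resultant with specialization when the leading coefficient survives, and the product-of-linear-forms argument over the roots of $\phi_1$. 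The points you flag as remaining bookkeeping — the drop in $\deg_{x_1}G$ under specialization and the generators with $D_i=0$, which lie in $I_1$ and impose no condition — are exactly the technical steps handled in the cited reference, so the sketch is correct and complete at the level of detail one would expect for a result the paper itself only quotes. One trivial remark: the degenerate case $s=1$ is already covered by your observation that $\phi_1$ is nonconstant and hence has a root in $\mathbb{C}$, so the resultant machinery is only needed for $s\geq 2$.
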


The above can be used when eliminating multiple variables by confirming the sufficient condition for each eliminated variable step-by-step. More precisely, suppose $x_1,\ldots, x_m$ are the variables to be eliminated, where $m < n$; let $I_{m}$ be the $m$-th elimination ideal of $I$. Suppose that we also have a partial solution $(\tilde{x}_{m+1},\ldots, \tilde{x}_n) \in \mathrm{V}(I_m)$. To state that there exists $(\tilde{x}_1,\ldots, \tilde{x}_m) \in \mathbb{C}^m$ such that $(\tilde{x}_1,\ldots,\tilde{x}_n) \in \mathrm{V}(I)$, it is sufficient to check whether for each $j, j = m, m-1,\ldots, 1$, partial solution $(\tilde{x}_{j+1}, \ldots,\tilde{x}_n) \in \mathrm{V}(I_j)$ can be extended to $(\tilde{x}_{j}, \tilde{x}_{j+1}, \ldots,\tilde{x}_n) \in \mathbb{C}^{n-j+1}$, i.e., there exists $\tilde{x}_j \in \mathbb{C}$ such that $(\tilde{x}_{j}, \tilde{x}_{j+1}, \ldots,\tilde{x}_n) \in \mathrm{V}(I_{j-1})$, where $I_j$ is the $j$-th elimination ideal of $I$. See \cite{Cox} for the details.

Now, by incorporating the above, we provide the following theorem indicating a method to remove the parameters that are in the parameter variety but not in the algebraic variety derived using Algorithm \ref{algorithm:alg1}.
\begin{theorem}\label{thm:extention}
Let $\left\{G_1, \ldots, G_s = h\right\}$ be the reduced Gr\"{o}bner basis of
\begin{align*}
\Il \subset \mathbb{C}(a)[x,\ldots, x^{(L)}, y, \ldots, y^{(L)}, u, \ldots, u^{(L)}]
\end{align*} with respect to lexicographic ordering
$%\begin{align*}
{x}_N^{(L)} >\cdots x_1^{(L)} > \cdots >\dot{x}_N \cdots > \dot{x}_1 > x_N > \cdots > x_1 > {y}^{(L)} > %\\
\cdots > y > u^{(L)} > \cdots > u
$. %\end{align*}
%The following variables are introduced:
%\begin{align*}
$z_j$ denote variables
\begin{comment}
$
    z_0 := y^{(L)}, z_1 := x_1, z_2 := x_2, \ldots, z_N := x_N, z_{N+1} := \dot{x}_1, \ldots, z_{N\times(L+1)} := x^{(L)}_N,
$
\end{comment}
$
   z_{N\times(L+1)} := x_1, z_{N\times(L+1)-1} := x_2, \ldots, z_{N\times(L+1)-(N-1)} := x_N, z_{N\times(L+1)-(N-2)} := \dot{x}_1, \ldots, z_{1} := x^{(L)}_N,
$
%\end{align*}
and $E_j$ denotes
a set of variables
\begin{comment}
$    %E_j =
       \left\{ u, \ldots, u^{(L-1)}, y, \ldots, y^{(L)} = z_0, z_{1},\ldots,z_{j-1}\right\}. %\quad 2 \leq j \leq N\times(L+1)+1
       $.
\end{comment}
\begin{align*}    %E_j =
       \left\{ u, \ldots, u^{(L-1)}, y, \ldots, y^{(L)}, z_{N(L+1)},\ldots,z_{j+1}\right\}. %\quad 2 \leq j \leq N\times(L+1)+1.
\end{align*}
%$[x,\ldots, x^{(L)}, y, \ldots, y^{(L)}, u, \ldots, u^{(L-1)}] \backslash z_{j},\ldots, z_{N\times(L+1)}$
For each $1 \leq i \leq s$, express $G_i$ in the form{\normalfont
\begin{align*}
    G_i = p_i^j(E_j)z_j^{D_i^j} + \mbox{(terms\, in\, which\,}\ z_j\  \mbox{has\, degree\,} < D_i^j),
\end{align*}
}\noindent
where $p_i^j(E_j) \in \mathbb{C}(a)[E_j]$.
%Suppose that we have a partial solution
%\begin{align*}
%\left(z_{j-1},\ldots,z_1, y, \ldots, y^{(L)}, u, \ldots, u^{(L)}\right) \in \mathrm{V}(I_{[N\times(L+1) - (j-1)]})
%\end{align*}
%where
\begin{comment}
Let %$I_{[N\times(L+1) - (j-1)]}$
$I_j$ be the %$\left( N\times(L+1) - (j-1) \right)$
$j$-th elimination ideal of $\Il$:
\begin{align*}
    I_{j} = \Il \cap \mathbb{C}(a)[z_{j+1},\ldots,z_{N(L+1)}, y, \ldots, y^{(L)}, u, \ldots, u^{(L-1)}].
\end{align*}
\end{comment}
For each $1 \leq j \leq N\times(L+1)$, we express
\begin{align}\label{eq:singular}
    %W_j :=\mathrm{V}(I_j) \cap \mathrm{V}(P_j)
    W_j :=\mathrm{V}(\langle G_1,\ldots, G_s\rangle \cap \mathbb{C}(a)[E_{j}]) \cap \mathrm{V}(P_j)
\end{align}
in %$\mathbb{C}^{j-1 + (1+M)\times (1+l)}$
$\mathbb{C}^{(N+1+M)\times (L+1)-j}$,
where $P_j \subset \mathbb{C}(a)[E_j]$ denotes the non-empty set of polynomials containing $p_i^j(1,\ldots, s)$ in which $\left\{G_1,\ldots, G_s\right\} \cap \mathbb{C}(a)[E_{j-1}]$ such that $D_i^j \geq 0$ and $p_i^j(E_j) \neq 0$.
If
\begin{align}\label{eq:suffcond}
%\left\{ W_j \mid j = 1,\ldots, N\times(L+1) \right\} = \phi,
 W_j  = \phi \quad \mbox{for all\ } j = 1,\ldots, N\times(L+1),
\end{align}
\begin{align*}
%\forall &\alpha \in
&\left\{ a \mid c_l(a) = v_l, l = 1,\ldots, L\right\}% \\
%&
\subset \mbox{the\, parameter\, variety\, of\, \eqref{eq:math_eq}, \eqref{eq:out_eq} given \eqref{eq:in_data}, \eqref{eq:out_data}}.
\end{align*}
\end{theorem}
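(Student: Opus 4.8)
The preceding theorem already supplies one inclusion --- the parameter variety of \eqref{eq:math_eq}, \eqref{eq:out_eq} lies inside $\left\{a\mid c_l(a)=v_l,\ l=1,\ldots,L\right\}$ --- so the whole task is the reverse inclusion, and the plan is to fix a parameter $\alpha$ with $c_l(\alpha)=v_l$ for every $l$ and to exhibit, for each measurement time $t_i$, state values $x(t_i),\ldots,x^{(L)}(t_i)$ that complete the observed data to a common root of the specialised system \eqref{eq:diff_model_alpha}; by Definition~\ref{dfn:N-pv2} this is exactly what it means for $\alpha$ to belong to the parameter variety. Writing $q_i$ for the observed input--output tuple at $t_i$, the claim to be proved is that $q_i$ lies in the projection onto the input--output coordinates of $\mathrm{V}(\Il)$ after the substitution $a=\alpha$, i.e.\ that the fibre of that variety over $q_i$ is non-empty.

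First I would observe that $q_i$ sits on the top elimination variety. Because the input--output elimination ideal $\Il\cap\mathbb{C}(a)[E_{N(L+1)}]$ is principal --- generated by the single polynomial $h$ which is the basis of the input--output equations \eqref{eq:ioeq} output by Algorithm~\ref{algorithm:alg2}, following \cite{forsman, jirstrand} --- it is enough to check $h(q_i;\alpha)=0$. Substituting $q_i$ and $a=\alpha$ into \eqref{eq:ioeq} turns the left-hand side into $\sum_l c_l(\alpha)f_l(q_i)=\sum_l v_l f_l(q_i)$, and $(v_1,\ldots,v_L)$ is by definition the (assumed unique) solution of the linear system \eqref{eq:sys_ioeq}, whose row indexed by $t_i$ is precisely $\sum_l f_l(q_i)v_l=f_{L+1}(q_i)$. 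Hence $h(q_i;\alpha)=0$, so $q_i$ is a point of $\mathrm{V}(\langle h\rangle)$, i.e.\ of the top elimination variety, at $a=\alpha$.

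Next I would lift $q_i$ through the descending chain of elimination ideals one state coordinate at a time, adjoining $z_{N(L+1)},z_{N(L+1)-1},\ldots,z_1$ in that order. At stage $j$ the elimination theorem (Proposition~\ref{prop:elim}) guarantees that $\{G_1,\ldots,G_s\}\cap\mathbb{C}(a)[E_{j-1}]$ is a Gr\"obner basis of the stage-$(j-1)$ elimination ideal, so the extension theorem (Proposition~\ref{prop:extension}), applied iteratively as described after it, says that a partial solution lying in the stage-$j$ elimination variety extends to a value for $z_j$ provided it avoids the common zero locus $\mathrm{V}(P_j)$ of the leading $z_j$-coefficients collected in the statement. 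The hypothesis $W_j=\phi$ from \eqref{eq:singular}, \eqref{eq:suffcond} says exactly that $\mathrm{V}(I_j)$ and $\mathrm{V}(P_j)$ are disjoint, so the partial solution carried along --- which by induction lies in $\mathrm{V}(I_j)$ --- can never be obstructed and always extends. After all $N(L+1)$ steps the result is a full tuple in $\mathrm{V}(\Il)$ (at $a=\alpha$) whose input--output part is $q_i$, that is, a root of \eqref{eq:diff_model_alpha} satisfying \eqref{eq:model_data}; carrying this out for every $t_i$ and appealing to Definition~\ref{dfn:N-pv2} places $\alpha$ in the parameter variety.

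The hard part, I expect, is the interaction between the symbolic parameter vector $a$ and its numerical specialisation $a=\alpha$. The Gr\"obner basis elements $G_i$, the elimination ideals $I_j$, and the leading coefficients $p_i^j$ all live over the function field $\mathbb{C}(a)$, whereas the extension theorem must be invoked over $\mathbb{C}$ after fixing $a=\alpha$; one must therefore argue that emptiness of each $W_j$ as formulated in the statement forces the specialised sets $\mathrm{V}(I_j)\cap\mathrm{V}(P_j)$ at $a=\alpha$ to be empty as well --- essentially that elimination and intersection commute with the substitution $a\mapsto\alpha$ up to a Nullstellensatz / clearing-of-denominators step, possibly at the cost of excluding an exceptional proper subvariety of parameters coming from the denominators of the $c_l$ and of the coefficients of the $G_i$. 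A subsidiary but unavoidable bookkeeping burden is to confirm at each stage which $G_i$ fall into $\mathbb{C}(a)[E_{j-1}]$, and with which degree $D_i^j$ in $z_j$, so that $P_j$ really is the obstruction set dictated by the extension theorem. Once those points are settled, the argument is the formal induction on $j$ sketched above.
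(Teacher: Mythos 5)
Your proposal is correct and follows essentially the same route as the paper's proof: you show the observed input--output tuple at each $t_i$ is a partial solution of the top elimination ideal (via the linear system defining the $v_l$), then lift it coordinate-by-coordinate through the chain of elimination ideals using Proposition~\ref{prop:extension}, with $W_j = \phi$ ruling out any obstruction, and conclude via Definition~\ref{dfn:N-pv2}. The specialisation issue you flag (working over $\mathbb{C}(a)$ versus substituting $a=\alpha$) is a genuine subtlety, but the paper's own proof passes over it just as quickly, so your treatment matches theirs.
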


\begin{proof}
Let $\beta \in \mathbb{C}^n$ be an element of $\left\{ a \mid c_l(a) = v_l, l = 1,\ldots, L\right\}$. As $c_l(\beta) = v_l (l = 1,\ldots, L)$, \eqref{eq:sys_ioeq_alpha} holds on replacing $\alpha$ with $\beta$. This implies that \begin{align}\label{eq:partial_io}
&(\hu(t_i),\hy(t_i),\ldots, {\hu}^{(N)}(t_i), \hy^{(N)}(t_i)) \nonumber \\
=& (\tu(t_i),\ty(t_i; \beta),\ldots, {\tu}^{(N)}(t_i), \ty^{(N)}(t_i; \beta)) \in \mathrm{V}(J_\beta)\, (i = 1,\ldots, L),
\end{align}
where $\tu(t_i)$ and $\ty(t_i; \beta)$ denote the inputs and output of \eqref{eq:math_eq}, \eqref{eq:out_eq}, respectively, and $J_\beta$ denotes the elimination ideal $J = \Il \cap \mathbb{C}(a)[y,\ldots, y^{(L)},u,\ldots, u^{(L)}]$, given a fixed parameter $\beta$. Thus, $(\tu(t_i),\ty(t_i; \beta),\ldots,$
${\tu}^{(N)}(t_i), \ty^{(N)}(t_i; \beta))$ is shown to be a partial solution of $\mathrm{V}(J_\beta)$. % where $J_\beta$ is the elimination ideal $J$ given a fixed parameter $\beta$
Therefore, in order to show that $\beta$ is in the parameter variety of \eqref{eq:math_eq}, \eqref{eq:out_eq}, given \eqref{eq:in_data}, \eqref{eq:out_data}, it remains to prove that there exists $(\tilde{x}(t_i;\beta), \ldots, \tilde{x}^{(L)}(t_i;\beta)) \in \mathbb{C}^{N\times(L+1)}\, (i = 1,\ldots, K)$ such that the following holds:
\begin{align}\label{eq:toshow}
(\tu(t_i),\ty(t_i; \beta),\ldots, {\tu}^{(N)}(t_i), \ty^{(N)}(t_i; \beta),\tilde{x}(t_i; \beta),\ldots,\tilde{x}^{(L)}(t_i; \beta))
\in \mathrm{V}(I^{(L)}).
\end{align}

To show this, according to Proposition \ref{prop:elim}, it is sufficient that for each $j (j = N\times(L+1),\ldots, 1)$, the partial solution
%$(\tilde{u}, \ldots, \tilde{u}^{(L-1)}, \tilde{y}, \ldots, \tilde{y}^{(L)}, \tilde{z}_{N(L+1)},\ldots,\tilde{z}_{j+1}) \in \mathrm{V}(I_j)$
\begin{align}\label{eq:partial1}
(\tu(t_i),\ty(t_i; \beta),\ldots, {\tu}^{(N)}(t_i), \ty^{(N)}(t_i; \beta),\tilde{z}_{N(L+1)}(t_i;\beta),\ldots,\tilde{z}_{j+1}(t_i;\beta))
\end{align}
can be extended to
\begin{align}\label{eq:partial2}
&(\tu(t_i),\ty(t_i; \beta),\ldots, {\tu}^{(N)}(t_i), \ty^{(N)}(t_i; \beta), \tilde{z}_{N(L+1)}(t_i;\beta),\ldots,\tilde{z}_{j+1}(t_i;\beta), \tilde{z}_j(t_i;\beta) ) %\nonumber \\
\in%&\, 
\mathbb{C}^{(N+1+M)\times(L+1)-j+1},
\end{align}
where $\tilde{z}_j(t_i;\beta)$ denotes the corresponding derivative of the state variables at $t_i$ given a fixed parameter $\beta$, i.e., there exists $\tilde{z}_j(t_i;\beta) \in \mathbb{C}$ such that \begin{align*}
&(\tu(t_i),\ty(t_i; \beta),\ldots, {\tu}^{(N)}(t_i), \ty^{(N)}(t_i; \beta), \tilde{z}_{N(L+1)}(t_i;\beta),\ldots,\tilde{z}_{j+1}(t_i;\beta), \tilde{z}_j(t_i;\beta) )\\
\in& \mathrm{V}(I^{(L)}\cap \mathbb{C}(a)[E_{j-1}]).%(I_{\beta,j-1}).
\end{align*}
  According to Proposition \ref{prop:elim}, $I^{(L)}\cap \mathbb{C}(a)[E_{j}] = \langle G_1, \ldots, G_s \rangle \cap \mathbb{C}(a)[E_{j}]$, which leads to $W_j = \mathrm{V}(I^{(L)} \cap \mathbb{C}(a)[E_{j}]) \cap \mathrm{V}(P_j)$. We denote $I_j$ as $I^{(L)} \cap \mathbb{C}(a)[E_{j}]$, the $j$-th elimination ideal of $I^{(L)}$.
  As $W_j = \phi$ implies that no partial solution
  \begin{align*}
    (\tilde{u}, \ldots, \tilde{u}^{(L-1)}, \tilde{y}, \ldots, \tilde{y}^{(L)}, \tilde{z}_{N(L+1)},\ldots,\tilde{z}_{j+1}) \in \mathrm{V}(I_j)
  \end{align*}
exists in $\mathrm{V}(P_j)$, given the arbitrary parameters of \eqref{eq:math_eq}, \eqref{eq:out_eq}, \eqref{eq:suffcond} becomes the sufficient condition for multivariate extension:
  \begin{align*}
    &\forall (\tilde{u}, \ldots, \tilde{u}^{(L-1)}, \tilde{y}, \ldots, \tilde{y}^{(L)}, \tilde{z}_{N(L+1)},\ldots,\tilde{z}_{j+1}) \in \mathrm{V}(I_{N(L+1)}),
    \exists (\tilde{z}_{N(L+1)},\ldots, \tilde{z}_{1}) \in \mathbb{C}^{N(L+1)}\\ & \mathrm{\, such\, that\, } (\tilde{u}, \ldots, \tilde{u}^{(L-1)}, \tilde{y}, \ldots, \tilde{y}^{(L)}, \tilde{z}_{N(L+1)},\ldots,\tilde{z}_{j+1}) \in \mathrm{V}(I^{(L)}).
    \end{align*}
If partial solution \eqref{eq:partial_io} is selected and can be extended to a complete solution of $\Il$, given $\beta$, $\tilde{z}_j's$ corresponding to the state variable derivatives must be those of \eqref{eq:diff_model} at $t_i$ given $\beta$ because \eqref{eq:math_eq}, \eqref{eq:out_eq} is assumed to admit a unique solution given a fixed parameter. This proves \eqref{eq:toshow} and hence, the theorem.
\end{proof}

To demonstrate the removal of the parameters not included in the parameter variety but in the algebraic variety computed through Algorithm \ref{algorithm:alg2}, Theorem \ref{thm:extention} is applied to \eqref{eq:virous} as follows.
\begin{example}\label{ex:singular}
As a result of Algorithm \ref{algorithm:alg2}, we get $l = 2$. The reduced Gr\"{o}bner basis of
\begin{align*}
    I^{(2)} := %\langle \rangle \\
    \langle
&\frac{\d x_2}{\d t} - \frac{a_4a_7}{a_6}x_3+a_4x_2, \frac{\d x_3}{\d t}-(1-a_5)a_6x_2+a_7x_3, &\nonumber \\
&\frac{\d^2 x_2}{\d t^2} - \frac{a_4a_7}{a_6}\frac{\d x_3}{\d t}+a_4\frac{\d x_2}{\d t}, \frac{\d^2 x_3}{\d t^2}-(1-a_5)a_6\frac{\d x_2}{\d t}+a_7\frac{\d x_3}{\d t},\\
&y-x_3, \frac{\d y}{\d t}-\frac{\d x_3}{\d t}, \frac{\d^2 y}{\d t^2}-\frac{\d^2 x_3}{\d t^2}\rangle &
\end{align*}
with respect to lexicographic ordering $\ddot{x}_3>\ddot{x}_2>\dot{x}_3>\dot{x}_2>x_3>x_2> \ddot{y}>\dot{y}>y$ is as follows:
\begin{align*}
G =& \left\{
\frac{\d^2y}{\d t^2}+(a_4+a_7)\frac{\d y}{\d t}+a_4a_5a_7y,
\quad(a_5a_6-a_6)x_2+\frac{\d y}{\d t}+a_7y,
\quad x_3-y,\right.\\
%&\left.a_6\frac{\d x_2}{\d t}+a_1a_3x_2-a_1a_4y,
&\left. (a_5a_6-a_6)\frac{\d {x}_2}{\d t}+(-a_4)\frac{\d^2 y}{\d t^2}+(-a_4a_5a_7)\frac{\d y}{\d t}, \right.\\
&\left. \quad \frac{\d x_3}{\d t}-\frac{\d y}{\d t},
\quad %a_3\frac{\d^2 x_2}{\d t^2}-a_1a_4\frac{\d x_3}{\d t}+a_1a_3\frac{\d x_2}{\d t}
(a_5a_6-a_6)\frac{\d^2 {x}_2}{\d t^2}+(a_4^2-a_4a_5a_7+a_4a_7)\frac{\d^2 y}{\d t^2}+(a_4^2a_5a_7)\frac{\d y}{\d t}
\right.\\
&\left.%\frac{\d^2 x_3}{\d t^2}-\frac{\d^2y}{\d t^2}
\frac{\d^2 {x}_3}{\d t^2}+(a_4+a_7)\frac{\d^2 y}{\d t^2}+(a_4a_5a_7)\frac{\d y}{\d t}
\right\}.
\end{align*}
Now, we introduce the following variables
\begin{align*}
    z_6 := x_2, z_5 := x_3, z_4 := \dot{x}_2, z_3:= \dot{x}_3, z_2:= \ddot{x}_2, z_1 := \ddot{x}_3.
\end{align*}
Let $I_{j}$ be the $j$-th elimination ideal of $I^{(2)}$, where $j = 1,\ldots, 2\times(2+1)$:
\begin{align*}
    I_{j} &= I^{(2)} \cap \mathbb{C}(a)[z_{j+1},\ldots,z_1, y, \ldots, y^{(L)}, u, \ldots, u^{(L-1)}] \\
    &= \langle G\rangle \cap \mathbb{C}(a)[z_{j+1},\ldots,z_1, y, \ldots, y^{(L)}, u, \ldots, u^{(L-1)}],
\end{align*}
for example,
\begin{align*}
    J = I_6 =& I^{(2)} \cap \mathbb{C}(a)[\ddot{y}, \dot{y}, y] \\%= I^{(2)} \cap \mathbb{C}(a)[z_0, \dot{y}, y]\\
    =& \langle \frac{\d^2y}{\d t^2}+(a_4+a_7)\frac{\d y}{\d t}+a_4a_5a_7y \rangle, \\
    I_{5} =& I_2 \cap \mathbb{C}(a)[x_2, \ddot{y}, \dot{y}, y] = I_2 \cap \mathbb{C}(a)[z_6, \ddot{y}, \dot{y}, y]\\
    =& \langle \frac{\d^2y}{\d t^2}+(a_4+a_7)\frac{\d y}{\d t}+a_4a_5a_7y, \textcolor{black}{(a_5a_6-a_6)z_6}+\frac{\d y}{\d t}+a_4y \rangle, \ldots.
\end{align*}
Thus, we get
\begin{align*}
    P_6 = \left\{ a_5a_6-a_6\right\}, P_5 = \left\{1\right\}, P_4 = \left\{a_5a_6 - a_6\right\}, P_3 = \left\{1\right\}, P_2 = \left\{a_5a_6 - a_6\right\}, P_1 = \left\{ 1\right\}
\end{align*}
assuming that $a_5a_6 -a_6 \neq 0$, i.e., $a_6 \neq 0, a_5 \neq 1$. %and an empty set $P_6$
As a result, for $j = 1,\ldots, 2\times(2+1)$
\begin{align*}
    W_j = \phi
\end{align*}
holds because $V(P_j) = \phi$. This shows that there is no parameter in the algebraic variety \eqref{eq:pv_virus} that is not in the parameter variety of \eqref{eq:math_eq}, \eqref{eq:out_eq}, given the data. Therefore, \eqref{eq:pv_virus} is coincident with the parameter variety in this case.
\end{example}

\section{Conclusion}\label{sec:conc}
In this study, we proposed a method for extracting all the feasible parameters that are uniquely determined by the observed time-series data and unidentifiable state-space models, explicitly and exhaustively. This parameter set was newly defined as the parameter variety of a model, given the time-series data. Our approach for the extraction of the parameter varieties is motivated by the challenges involved in dealing with unidentifiable models: the non-uniqueness of the feasible parameters causes the system properties, based on the estimated parameters, to depend on the estimated parameters.

The applicability of the proposed method was clearly demonstrated through viral dynamics analysis, given real data. In the application of our method to a model that represents the viral dynamics under therapy, the considerations on the viral dynamics that were overlooked were mentioned. Besides, by extracting the parameter varieties, the possible ranges of the parameters could be investigated. Furthermore, the possibility of geometric analysis of the parameter variety for capturing the abnormalities of the dynamics under consideration was also suggested.

In addition, we have theoretically justified the proposed method in which the input-output relationships of a model are investigated by examining the truncated differential ideals initially. Further, the observed time-series data or pseudo-data generated using the model are incorporated into the basis of the input-output relationships of the model; thereby, an algebraic variety containing the parameter variety is obtained. The truncation is determined by computing the minimal number of differentiations required to extract the input-output relationships from the state-space model. This truncation reasonably retains the assumptions on the smoothness of the model. Moreover, the algebraic variety containing the parameter variety of the model, given the data, is investigated under the framework of commutative algebra. This implies that computational algebraic techniques or software can be utilized to compute the algebraic variety containing the parameter variety and the parameters that are included in the algebraic variety but not in the parameter variety. This is important for practical application because our method can be applied without requiring detailed knowledge on algebra and algebraic geometry.

Although the incorporation of data into the basis of the input-output relationships of the model has been discussed in this study, it needs to be considered further. For example, it may be better to increase the number of measurement time points instead of using the minimal required, considering the robustness of the method against measurement noise. Besides, regarding pseudo-data generation using the models, the derivatives need be carefully estimated. Therefore, depending on the situation, it may be better to apply data-driven methods such as the Gaussian process for generating pseudo-data.
%Even though the applicability of the proposed method is confirmed through the example of viral dynamics analysis given real data, these future works makes the method real world applications with noisy data.

\section*{Acknowledgments}
This work was supported by JST CREST JPMJCR1914 and JSPS KAKENHI Grant Numbers JP20J21185 and 20K11693.

\appendix
\section{Exact solution to \eqref{eq:virous}}\label{sec:exact}
In this appendix, we show the exact solution for \eqref{eq:virous} \begin{align*}
    \frac{\d}{\d t}\begin{pmatrix}
    x_2\\
    x_3
    \end{pmatrix} = A
    \begin{pmatrix}
    x_2\\
    x_3
    \end{pmatrix},\quad A:= \begin{pmatrix}
    -a_4 & \frac{a_4a_7}{a_6}\\
    (1-a_5)a_6 & -a_7
    \end{pmatrix}
\end{align*}
under initial condition \eqref{eq:initcond}.
It can be confirmed in a straightforward manner that coefficient matrix $A$ has the following two eigenvalues %$\lambda_1, \lambda_2$
%Eigen values of $A$, we denote as, are as follows:
\begin{align*}
    \lambda_1 &:= -\left(\frac{a_{4}+a_7}{2}\right)-\frac{\sqrt{2\,a_{4}\,a_{7}+{a_{4}}^2+{a_{7}}^2-4\,a_{4}\,a_{5}\,a_{7}}}{2}, \\
   \lambda_2 &:= -\left(\frac{a_{4}+a_7}{2}\right) + \frac{\sqrt{2\,a_{4}\,a_{7}+{a_{4}}^2+{a_{7}}^2-4\,a_{4}\,a_{5}\,a_{7}}}{2},
\end{align*}
and the corresponding eigenvectors %corresponding to $\lambda_1, \lambda_2$ are
\begin{align*}
& {\left(\cfrac{\cfrac{a_{4}}{2}-\cfrac{a_{7}}{2}+\cfrac{\sqrt{2\,a_{4}\,a_{7}+{a_{4}}^2+{a_{7}}^2-4\,a_{4}\,a_{5}\,a_{7}}}{2}}{a_{6}\,\left(a_{5}-1\right)}, 1 \right)}^\top, \\
& {\left( \cfrac{\cfrac{a_{4}}{2}-\cfrac{a_{7}}{2}-\cfrac{\sqrt{2\,a_{4}\,a_{7}+{a_{4}}^2+{a_{7}}^2-4\,a_{4}\,a_{5}\,a_{7}}}{2}}{a_{6}\,\left(a_{5}-1\right)}, 1\right)}^\top.
\end{align*}
Hence, the solution $x_3(t)$ is given as
\begin{align}
\begin{aligned}
x_3(t) = &x_{3}(T_0)\left({\mathrm{e}}^{\left(t-T_{0}\right)\,\lambda_1}\,
\left(1-B\right) + {\mathrm{e}}^{\left(t-T_{0}\right)\,\lambda_2}\,B\right), \\
B &= \frac{%\left(
a_{4}+a_{7}-2\,a_{5}\,a_{7}+\sqrt{2\,a_{4}\,a_{7}+{a_{4}}^2+{a_{7}}^2-4\,a_{4}\,a_{5}\,a_{7}}%\right)
}{2\,\sqrt{2\,a_{4}\,a_{7}+{a_{4}}^2+{a_{7}}^2-4\,a_{4}\,a_{5}\,a_{7}}}. \label{eq:exact}
\end{aligned}
\end{align}

\end{document}